\begin{document}

\title[Web Question Answering with Neurosymbolic Program Synthesis]{Web Question Answering \\ with Neurosymbolic Program Synthesis}

\author{Qiaochu Chen}
\affiliation{
  \institution{University of Texas at Austin} 
  \city{Austin}
  \state{Texas}
  \country{USA}
}
\email{qchen@cs.utexas.edu}

\author{Aaron Lamoreaux}
\affiliation{
  \institution{University of Texas at Austin}
  \city{Austin}
  \state{Texas}
  \country{USA} 
}
\email{lamoreauxaj@gmail.com}

\author{Xinyu Wang}
\affiliation{
  \institution{University of Michigan}
  \city{Ann Arbor}
  \state{Michigan}
  \country{USA}
}
\email{xwangsd@umich.edu}

\author{Greg Durrett}
\affiliation{
  \institution{University of Texas at Austin} 
  \city{Austin}
  \state{Texas}
  \country{USA}
}
\email{gdurrett@cs.utexas.edu} 

\author{Osbert Bastani}
\affiliation{
  \institution{University of Pennsylvania}
  \city{Philadelphia}
  \state{Pennsylvania}
  \country{USA}
}
\email{obastani@seas.upenn.edu}

\author{Isil Dillig}
\affiliation{
  \institution{University of Texas at Austin}
  \city{Austin}
  \state{Texas}
  \country{USA}
}
\email{isil@cs.utexas.edu}

\newcommand{\assign}{\gets}
\newcommand{\res}{R}
\newcommand{\bestf}{\emph{opt}}
\newcommand{\guards}{G}
\newcommand{\extractors}{E}
\newcommand{\dsl}{{\sc WebQA}\xspace}

\newcommand{\question}{Q}
\newcommand{\keywordss}{K}
\newcommand{\webpage}{W}
\newcommand{\guard}{\psi}
\newcommand{\prog}{p}
\newcommand{\stringExt}{e}
\newcommand{\nodeExt}{\nu}
\newcommand{\neural}{\phi}
\newcommand{\nfilter}{\varphi}
\newcommand{\fonescore}{s}
\newcommand{\semantics}[1]{\llbracket{#1}\rrbracket}                    
\newcommand{\jc}[1]{\textcolor{red}{\textbf{JC:} #1}}
\newcommand{\greg}[1]{\textcolor{red}{\textbf{GD:} #1}}
\newcommand{\xinyu}[1]{\textcolor{red}{\textbf{XW:} #1}}
\newcommand{\osbert}[1]{\textcolor{red}{\textbf{OB:} #1}}
\newcommand{\aaron}[1]{\textcolor{red}{\textbf{AL:} #1}}
\newcommand{\red}[1]{\textcolor{red}{#1}}

\newcommand{\toolname}{{\sc WebQA}\xspace}
\newcommand{\toolnameD}{\toolname-{\sc NoDecomp}\xspace}
\newcommand{\toolnameP}{\toolname-{\sc NoPrune}\xspace}
\newcommand{\tableT}{{\tt Table}\xspace}
\newcommand{\listT}{{\tt List}\xspace}
\newcommand{\stringT}{{\tt String}\xspace}
\newcommand{\setT}{{\tt Set}\xspace}

\newcommand{\figref}[1]{Figure~\ref{#1}}

\algnewcommand\Input{\textbf{input: }}
\algnewcommand\Output{\textbf{output: }}

\newcommand{\partition}{P}
\newcommand{\examples}{\mathcal{E}}
\newcommand{\examplee}{e}
\newcommand{\worklist}{\mathcal{W}}

\newcommand\numberthis{\addtocounter{equation}{1}\tag{\theequation}}

\newcommand{\irule}[2]{\mkern-2mu\displaystyle\frac{#1}{\vphantom{,}#2}\mkern-2mu}

\newcommand{\inputexamples}{\mathcal{I}}
\newcommand{\outputexamples}{\mathcal{O}}
\newcommand{\inputexamplee}{i}
\newcommand{\outputexamplee}{o}
\begin{abstract}
In this paper, we propose a new technique based on program synthesis for extracting information from webpages. Given a natural language query and a few labeled webpages, our method synthesizes a program that can be used to extract similar types of information from other unlabeled webpages. To handle websites with diverse structure, our approach employs a neurosymbolic DSL that incorporates both neural NLP models as well as standard language constructs for tree navigation and string manipulation. We also propose an optimal synthesis algorithm that generates all DSL programs that achieve optimal $F_1$ score on the training examples. Our synthesis technique is compositional, prunes the search space by exploiting a  monotonicity property of the DSL, and uses transductive learning to select programs with good generalization power. We have implemented these ideas in a new tool called \toolname and evaluate  it on 25 different tasks across multiple domains. Our experiments show that \toolname significantly outperforms existing tools such as state-of-the-art question answering models and wrapper induction systems. 
\end{abstract}

\begin{CCSXML}
<ccs2012>
   <concept>
       <concept_id>10011007.10011074.10011092.10011782</concept_id>
       <concept_desc>Software and its engineering~Automatic programming</concept_desc>
       <concept_significance>500</concept_significance>
       </concept>
   <concept>
       <concept_id>10002951.10003260.10003277.10003279</concept_id>
       <concept_desc>Information systems~Data extraction and integration</concept_desc>
       <concept_significance>300</concept_significance>
       </concept>
 </ccs2012>
\end{CCSXML}
\ccsdesc[500]{Software and its engineering~Automatic programming}
\ccsdesc[300]{Information systems~Data extraction and integration}

\keywords{Program Synthesis, Programming by Example, Web Information Extraction}  

\maketitle

\section{Introduction}\label{sec:intro}

As the amount of information available on the web proliferates, there is a growing need for tools that can extract relevant information from  websites. Due to the importance of this problem, there has been a flurry of research activity on \emph{information extraction}~\cite{liang2014, lockard20zeroshotceres} and \emph{wrapper induction}~\cite{muslea1999hierarchical, hsu1998generating, chang2001iepad, crescenzi2001roadrunner, anton2005xpath, raza2020web, flashextract, gulhane2011web}. In particular, most recent research from the natural language processing (NLP) community  focuses on unstructured text documents and employs powerful neural models to automate information extraction and question answering (QA) tasks. On the other hand, most  wrapper induction work 
focuses on semi-structured documents and aims to synthesize programs (e.g., XPath queries) to extract relevant nodes from the DOM tree. While such wrapper induction techniques work well when the target webpages have a shared global schema (e.g., Yelp pages or LinkedIn profiles), they are not as effective on structurally heterogeneous websites such as faculty webpages. On the other hand, ML-based techniques from the NLP community are, in principal, applicable to  heterogeneous websites; however, by treating the entire webpage as unstructured text, they fail to take advantage of the inherent tree structure of HTML documents. 

In this paper, we propose a new information extraction approach ---based on \emph{neurosymbolic program synthesis} --- that combines the relative strengths of wrapper induction techniques for webpages with the flexibility of neural models for unstructured documents. Our approach targets structurally heterogeneous websites with no shared global schema and can be used to automate many different types of information extraction tasks. Similar to prior program synthesis approaches\cite{fidex, flashextract}, our approach can learn useful extractors from a \emph{small} number of labeled webpages.

\begin{figure}
    \includegraphics[scale=0.24, trim=237 500 500 270, clip ]{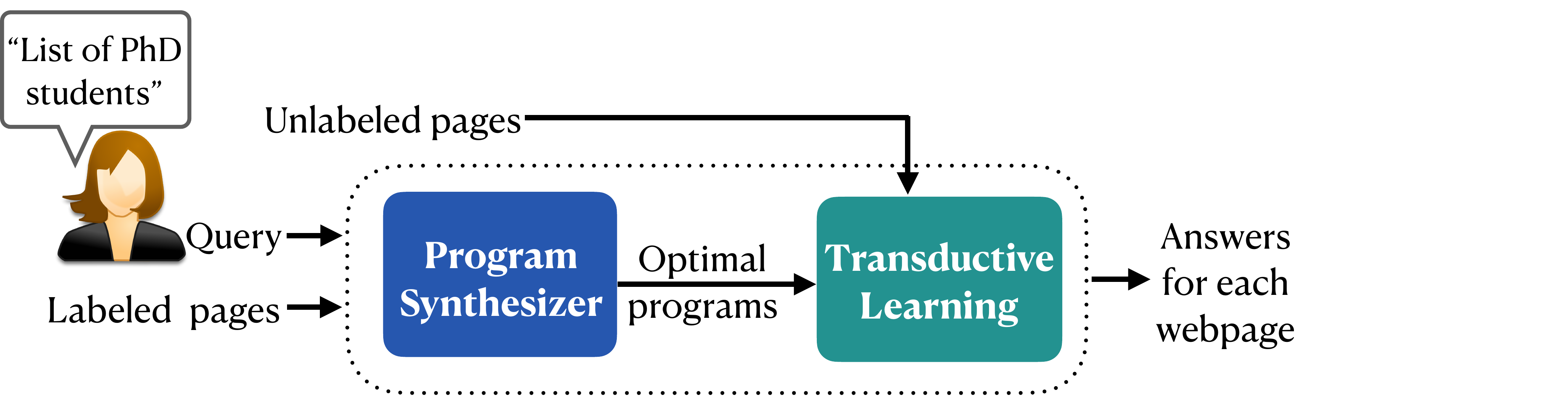}
    \caption{Schematic overview of our approach}
    \label{fig:workflow}
\end{figure}

As illustrated in Figure~\ref{fig:workflow}, our approach takes three inputs, including (1) a natural language query, (2) a small number of labeled webpages, and (3) a much bigger set of unlabeled webpages from which to extract information. For instance, if the task is to extract PhD students from faculty webpages, the input might consist of a question such as ``Who are the PhD students?" as well as keywords like "advisees'' and "PhD students''. In addition, the user would also provide a set of target faculty webpages, together with labels (i.e., names of PhD students) for a few of these. Given this input, the goal of our technique is to generate a program that can be used to extract the desired information from all target webpages.


To solve this challenging problem, we employ a multi-pronged solution that incorporates three key ingredients:
\begin{itemize}[leftmargin=*]
    \item {\bf Neurosymbolic DSL: } To combine the relative strengths of wrapper induction techniques with the flexibility of language models, we design a new neurosymbolic domain-specific language targeted for web question answering. Our DSL combines pre-trained neural modules for natural language processing with standard programming language constructs for string processing and tree traversal.
    \item {\bf Optimal program synthesis:} To utilize this DSL for automated web information extraction, we describe a new program synthesis technique for finding DSL programs that best fit the labeled webpages. However, since it is often impossible to find programs that \emph{exactly} fit the provided labels, we instead search for programs that optimize $F_1$ score\footnote{{ $F_1$ score is computed as  $2 \cdot \frac{\texttt{precision} \cdot \texttt{recall}}{\texttt{precision} + \texttt{recall}}$. It is a common evaluation metric in information extraction. }}. Our proposed optimal synthesis method is compositional and leverages   a  monotonicity property of the DSL to aggressively prune parts of the search space that are guaranteed \emph{not} to contain an optimal program.  
    \item {\bf Transductive program selection:} During synthesis, there are often \emph{many} (e.g., hundreds of) DSL programs with optimal $F_1$ score on the labeled data. However, not all of these candidate programs perform well on  test data, and standard heuristics (e.g., based on program size) are not effective at distinguishing between these programs. We address this challenge using transductive learning: it generates \emph{soft labels}  for the test data based on all candidate programs and then chooses the ``consensus'' program whose output most closely matches the soft labels.
\end{itemize}

We have implemented our proposed approach in a tool called \toolname{}
and evaluate it across  several different tasks  and many webpages. Our evaluation demonstrates that \toolname{} yields significantly better results compared to existing baselines,  including both question answering models and wrapper induction systems. We also perform ablation studies to evaluate the relative importance of our proposed techniques and show that all of these ideas are important for making this approach practical.

In summary, this paper makes the following  contributions:

\begin{itemize}[leftmargin=*]
    \item We propose a new technique for web question answering that is based on optimal neurosymbolic program synthesis.
    \item We present a DSL for web information extraction that combines pre-trained NLP models with traditional language constructs for string manipulation and tree traversal.
    \item We describe a compositional program synthesis technique for finding all programs that achieve optimal $F_1$ score on the labeled webpages. Our synthesis algorithm prunes the search space by exploiting a monotonicity property of the DSL with respect to recall.
    \item We present a transductive learning technique for choosing a good program for labeling the target webpages.
    \item We implement our approach in a tool called \toolname{} and evaluate it on 25 different tasks spanning four domains and 160 webpages.
\end{itemize}
\section{Motivating Example}\label{sec:motivating}

In this section, we present a motivating scenario for  \toolname\  and highlight salient features of our approach.

\begin{figure*}[!t]
\hspace{-15pt}
\begin{center}
    \begin{minipage}[t]{0.63\linewidth}
    \includegraphics[width=0.48\linewidth,trim=180 300 200 20, clip]{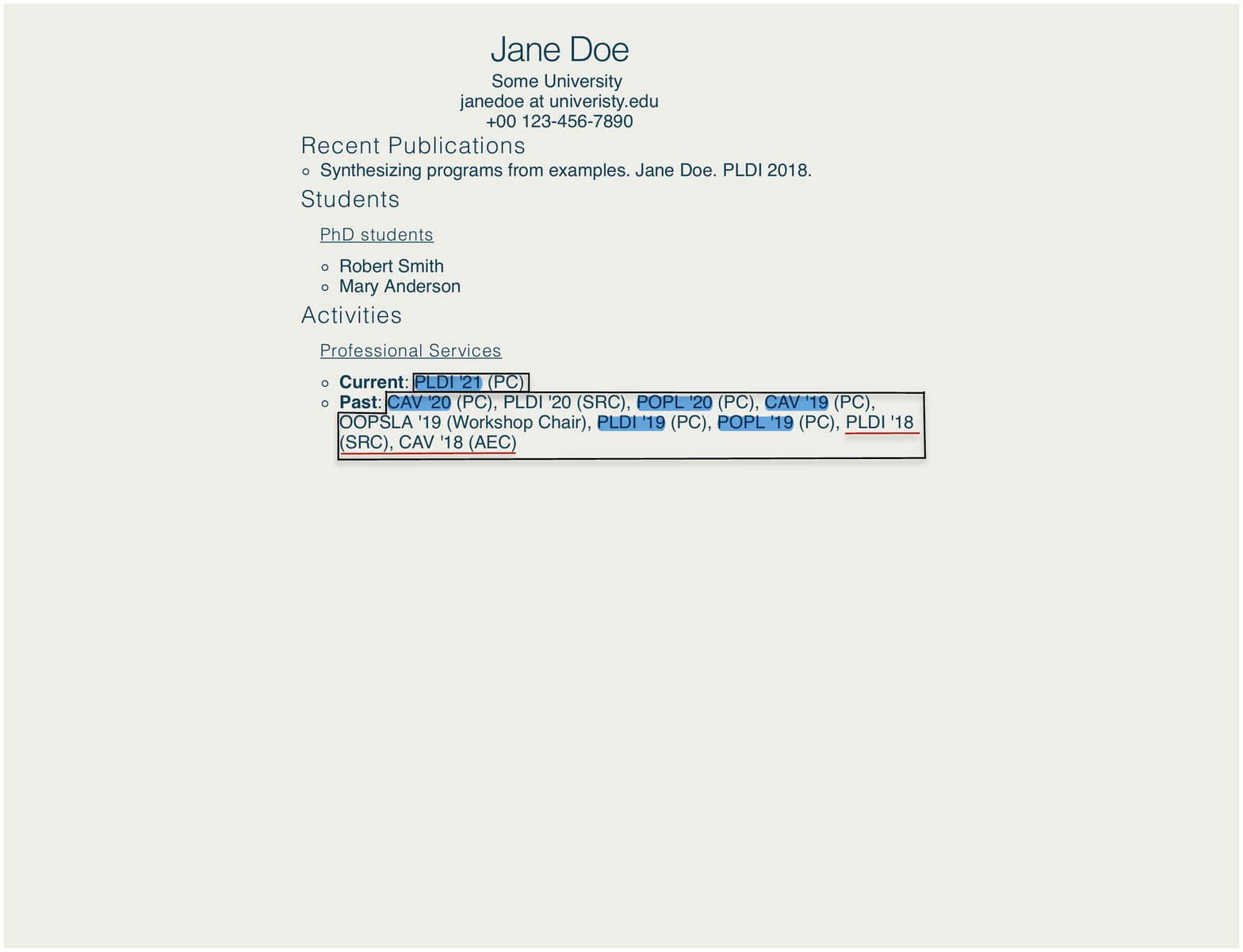}
    \includegraphics[width=0.5\linewidth,trim=170 290 170 0, clip]{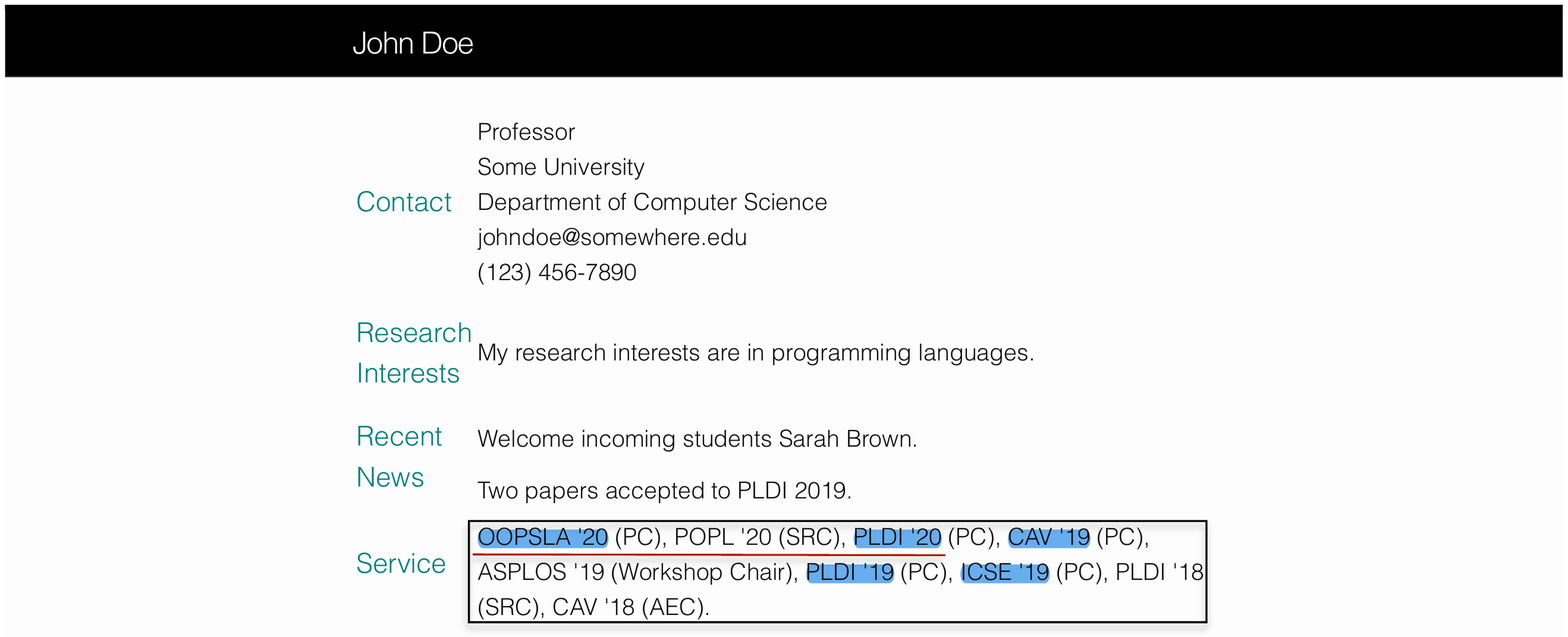}
    \caption{Sample faculty websites with their program committee information. Correct answers are in blue; the output of a QA model is underlined in red.}
    \label{fig:webpages}
    \end{minipage}
    \hspace{20pt}
    \begin{minipage}[t]{0.30\linewidth}
    \includegraphics[width=1.0\linewidth, trim=200 310 200 0, clip]{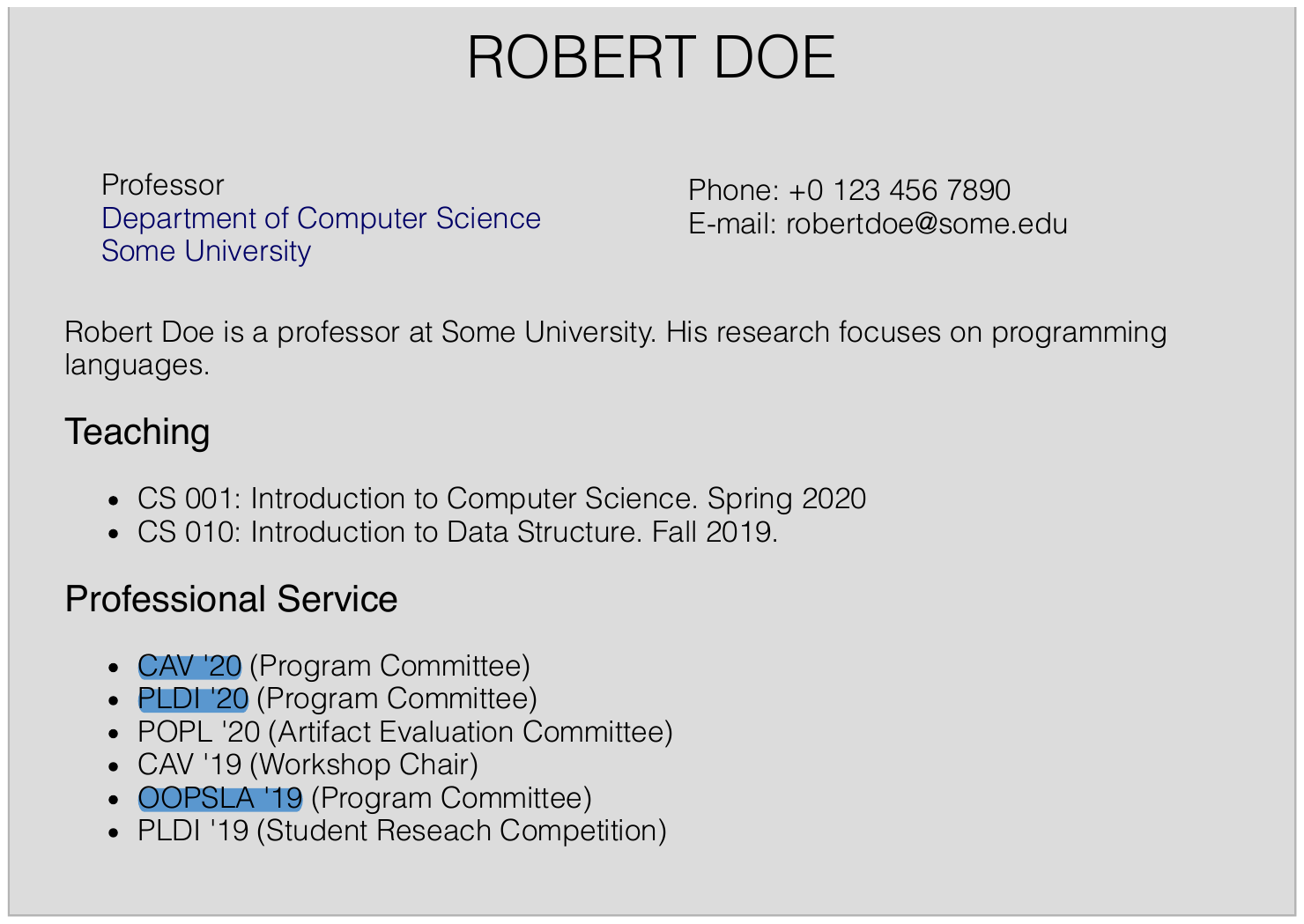}
    \caption{The synthesized program also works on this website; highlighted text are the extracted output.}
    \label{fig:webpages2}
    \end{minipage}
\end{center}

\end{figure*}



\noindent
\paragraph{Usage scenario.} Suppose that the PC chair for a conference needs to form a program committee, and she has access to the websites of many researchers. To help her form a good committee, she wants to extract program committees that each researcher has served on (which is often available on their websites). Since there are too many websites, extracting this information manually is too laborious. 
Our proposed system, \toolname, is useful in scenarios like this
that require collecting information from many structurally heterogeneous websites. 




To use \toolname, the user starts by providing a question (e.g., ``Which program committees has this researcher served on?'') and a set of keywords (e.g., ``PC'', ``Program Committee'', ``Service''). Then, given a target set of websites, \toolname\ asks the user to provide labels for a small number of webpages. For instance,  Figure~\ref{fig:webpages} presents two (hypothetical) websites that \toolname may show to the user, with the user-provided labels highlighted in blue. Observe that both of these webpages are semi-structured in the sense that they contain clearly-delineated sections (e.g., Students, Service); however, they differ both in terms of their high-level structure and what information they contain. 


\paragraph{Limitations of existing approaches.} We now use this simple motivating example to illustrate why existing approaches are not effective for this type of tasks. As mentioned in Section~\ref{sec:intro}, there are two classes of techniques, namely \emph{program induction} and \emph{question answering}, that could potentially be useful in this setting. 

Like our approach, program induction techniques aim to extract information from webpages based on a small number of user-provided training examples~\cite{raza2020web,iyer2019synthesis}.  Specifically, given a few  labeled webpages, these techniques learn XPath expressions to locate  relevant nodes in the DOM tree. However, as illustrated in Figure~\ref{fig:webpages}, researcher webpages typically do not have a uniform structure. Furthermore, even for webpages that are structurally somewhat similar, they exhibit minor variations (e.g., different section names, relative ordering of sections etc.) that make it very difficult to learn XPath expressions that   generalize well to unseen websites. In addition, almost all existing techniques in this space focus on extracting relevant nodes in the DOM tree; however, they do not attempt to perform any further text processing within that node. 
As illustrated by both webpages in Figure~\ref{fig:webpages}, extracting the desired information requires further processing at the text level, such as extracting relevant substrings.

An alternative approach for automating this task is to use a state-of-the-art question answering (QA) system that treats the entire webpage as a raw sequence of words.
However, in practice, such approaches perform poorly since they are not designed to leverage the tree structure of the document. 
Furthermore, because they treat  text across different DOM nodes as natural language,  they have difficulty dealing with more structured information like long comma-delineated lists or formatting with parentheticals. 
For instance, for the two webpages from Figure~\ref{fig:webpages} and the question ``Which program committees has this researcher served on?'', a BERT-based QA system~\cite{bert} yields the suboptimal answers underlined in red in Figure~\ref{fig:webpages}. In particular, it either outputs incorrect spans or includes  text that should not be part of the answer (e.g. ``POPL'20 (SRC)'' in the second webpage).

\paragraph{Key idea \#1: Neurosymbolic DSL} Our approach combines the relative strengths of machine learning and program induction techniques by synthesizing programs in a neurosymbolic DSL for web information extraction.  In particular, our proposed DSL incorporates both pre-trained neural models for question answering, keyword matching, and entity extraction with standard programming language constructs for string processing and tree navigation.  The tree navigation constructs allow taking advantage of webpage structure, while making it possible to handle minor variations (e.g., exact section names) using pre-trained neural models. Furthermore, the presence of string processing constructs in the DSL allows our method to extract fine-grained information within individual tree nodes. 

In more detail, a program in our DSL is structured to first locate relevant nodes in the tree representation of a webpage (see Figure~\ref{fig:webtree}) and then perform additional information extraction from each tree node.  For example, the following  code snippet in our DSL can be used to locate the relevant parts of the webpages from Figure~\ref{fig:webpages}:
{
\small
\begin{align*}
         & {\tt \small GetLeaves}({\tt \small GetDescendents}(r, \lambda z. {\tt \small matchKeyword}(z, \keywordss)))
        \numberthis \label{eq:locator1}
\end{align*}}
\normalsize
\begin{figure}[t]
    \centering
    \includegraphics[width=0.43\textwidth, trim=400 200 200 50, clip]{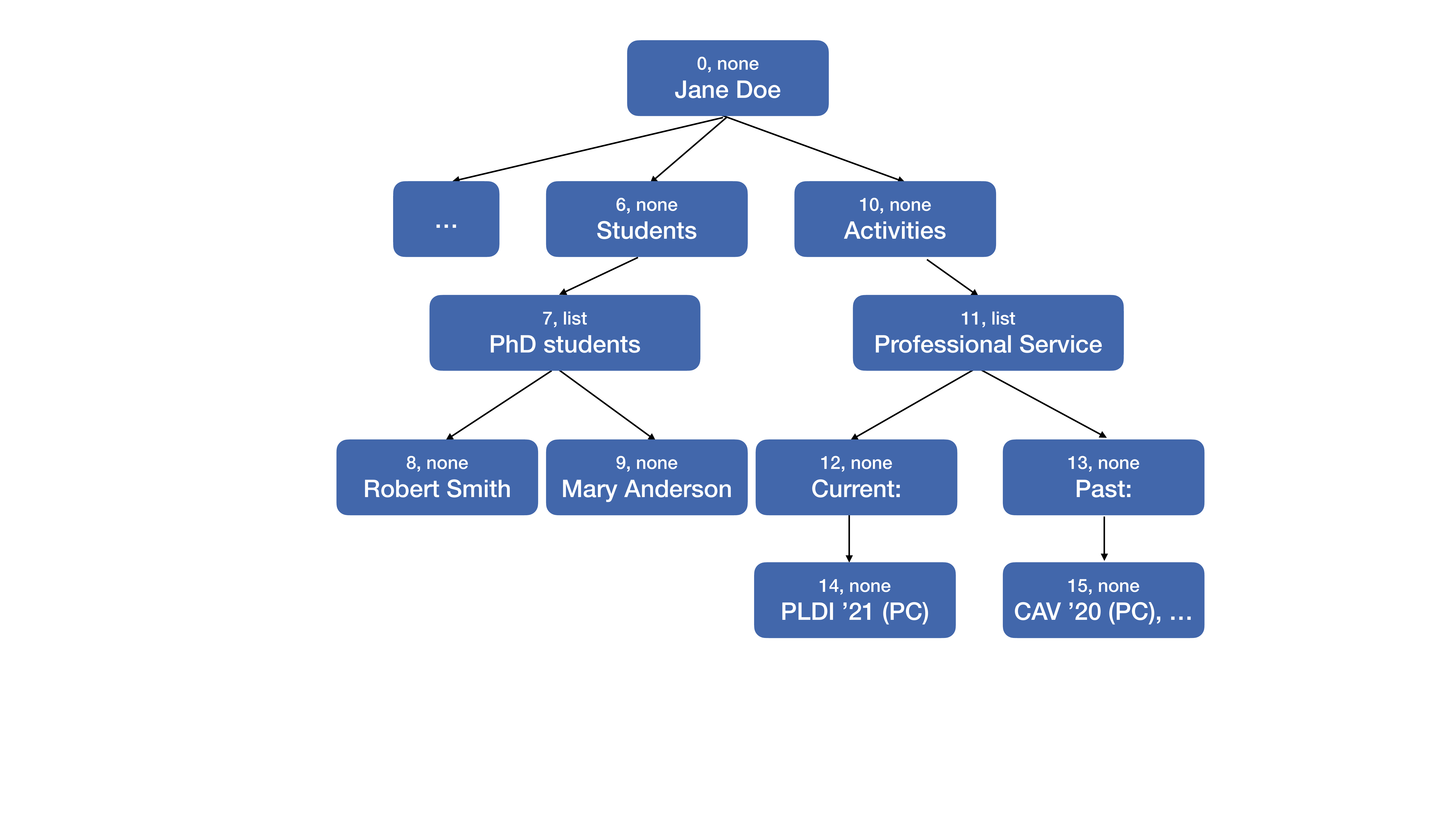}
    \caption{Tree representation of top webpage from Figure~\ref{fig:webpages}. Each node contains the node id, type and its content.}
    \label{fig:webtree}
\end{figure}

Here, $r$ is the root node of the input webpage, and the construct {\tt GetDescendants}($r, \phi$)  returns all tree nodes whose content satisfies predicate $\phi$. In the code snippet above, the predicate $\lambda z. \texttt{matchKeyword}(z, K)$ is implemented by a neural network  that has been pre-trained for keyword matching. Thus, this program first locates all tree nodes whose content matches any of the provided keywords $K$ and returns all of their leaf nodes. For example, given the tree  in Figure~\ref{fig:webtree} representing the top webpage from \figref{fig:webpages}, the {\tt GetDescendants} sub-program will match node $11$, and {\tt GetLeaves}\footnote{Actually, there is no explicit {\tt GetLeaves}(v) construct in our DSL; this is just syntactic sugar for {\tt GetDescendants}(v, $\lambda$ n. {\tt isLeaf}(n)). } will return  nodes $14$ and $15$, which are  leaf nodes of the subtree rooted at node $11$. For the webpages in Figure~\ref{fig:webpages}, this program yields the tree nodes annotated using black boxes. 

Next, given the tree nodes returned by the above code snippet, we can extract the desired information from these nodes using the following code snippet in our DSL: 
\small
\begin{align*}
         &\lambda x. 
         {\tt GetEntity}({\tt Filter}({\tt Split}({\tt ExtractContent}(x),  {\tt COMMA}), \\
         & \qquad \qquad \qquad \qquad \qquad \qquad \lambda z. {\tt matchKeyword}(z, K)), {\tt ORG}) \numberthis \label{eq:extractor1}
\end{align*}
\normalsize
In particular, this code snippet first retrieves the content of tree node $x$ using  {\tt ExtractContent}  and then splits it into a set of (comma-separated) strings using {\tt Split}. Then, it filters those elements that do not match the provided keywords and finally extracts substrings that correspond to an organization entity\footnote{{Note that there is no explicit ${\tt GetEntity}$ on our DSL; this is a syntactic sugar for ${\tt Substring}(e, \lambda z. {\tt hasEntity}(z, {\tt ORG}), 1)$.}}. Thus, assuming sufficiently good neural models for keyword matching and entity recognition, the output of this program would be exactly the highlighted text for the webpages from Figure~\ref{fig:webpages}.

It is worth noting that the extraction logic described above generalizes fairly well across websites with quite different layouts. In particular, the same DSL program can be used to extract the desired information from Figure~\ref{fig:webpages2} even though this webpage looks quite different from those in Figure~\ref{fig:webpages}. 

\paragraph{Key idea \#2: Allowing imperfect solutions.} In our example so far, we were able to find a DSL program that produces exactly the highlighted text from examples in Figure~\ref{fig:webpages}. However, suppose that the pre-trained network for entity extraction is unable to recognize computer science conference names as organizations. In that case, the output of the extraction program from Eq.~\ref{eq:extractor1} would not exactly match the user-provided labels.  In fact, there is \emph{no} program in our DSL that would produce exactly the desired output.  



To deal with this difficulty, our synthesis algorithm aims to  find programs that maximize $F_1$ score  rather than looking for solutions that exactly match the user-provided labels. Thus, we frame our problem as \emph{optimal program synthesis}, where the goal is to find programs that maximize some optimization objective ($F_1$ score in our case). This optimality requirement makes the synthesis problem harder because we need to exhaustively explore the search space. 

\paragraph{Key idea \#3: Transductive learning.} An additional difficulty in our context is that there may be hundreds or even thousands of optimal solutions for a  synthesis task. In particular, given the scarcity of training examples, many different DSL programs  yield the same $F_1$ score on the labeled webpages. For instance, for the two webpages from \figref{fig:webpages}, there are actually $85$  optimal DSL programs that achieve the same $F_1$ score. 
Existing  techniques in the synthesis literature deal with the under-constrained nature of input-output examples by using heuristics to distinguish different candidate solutions. However,  standard heuristics (e.g., based on AST size) do not work well in our setting because there are still \emph{many} programs that are tied with respect to such  heuristics. 

Our approach deals with this challenge using \emph{transductive learning}. In particular, 
given all programs that yield optimal $F_1$ score on the labeled data, it generates \emph{soft labels} for unlabeled webpages by running these programs on the unlabeled webpages and aggregating their outputs. 
Then, among these programs, we choose the one whose outputs most closely match the soft labels for the unlabeled webpages.  In other words,  transductive learning  allows our method to utilize the unlabeled data to choose a most promising program and obviates the need for complex hand-crafted heuristics.

\section{Preliminaries}\label{sec:prelim}

In this section, we discuss how we represent webpages as trees.
Our   representation is different from the standard Document Object Model (DOM) and represents the nesting relationship between {text} elements on the \emph{rendered} webpage to better facilitate web question answering.

\begin{definition}{\bf (Webpage)} A \emph{webpage} is a tree $(N, E, n_0)$ with root node $n_0 \in N$, nodes $N$ and edges $E$. An edge is a pair $(n, n')$ where $n$ is the parent of  $n'$, and each node is  a triple $({\rm id}, {\rm text}, {\rm type})$ where  text is the string content of that node and ${\rm type} \in \{ \emph{list}, \emph{table}, \emph{none} \}$ indicates whether the node corresponds to an HTML list, table, or neither. 
\end{definition}

Intuitively, an edge $(n, n')$  indicates that
the text of node $n$ is the header for that of node $n'$ --- i.e., text of $n'$ is nested inside that of $n$ on the rendered version of the webpage. For instance, given an HTML document with title ``Title" and body text ``Text", our representation introduces an edge $(n, n')$ where $n$ has text ``Title'' and $n'$ contains ``Text". 
 
In our  representation, internal nodes can represent structured HTML elements like lists (both ordered and unordered) as well as tables. 
For a node $n$ representing an HTML list (resp. table), $n$'s children correspond to elements in the list (resp. rows of the table). 
\begin{example}
Our method represents the first webpage in \figref{fig:webpages} as the tree shown in \figref{fig:webtree}. 

\end{example}
\section{DSL for Web Question Answering}\label{sec:dsl}

In this section, we describe our domain-specific language called \dsl for web information extraction. At a high level, this DSL  combines pre-trained neural models for standard NLP tasks (i.e., question answering, entity extraction, and keyword matching) with symbolic constructs for manipulating strings and navigating the tree structure of the webpage. As shown in Figure~\ref{fig:dsl}, a program in this DSL takes as input a question $Q$, keyword(s) $K$, and a webpage $W$, 
and it returns a \emph{set} of strings that collectively answer the question.

\begin{figure}[t]
\footnotesize
\[
\begin{array}{rlll}

{\rm Program} \ \prog ::= & \lambda \question_, \keywordss, \webpage. \{   \guard_1 \rightarrow \lambda x. e_1, \ldots, \guard_n \rightarrow \lambda x.  e_n \} \\


{\rm Guard} \ \guard ::= & \texttt{Sat}(\nodeExt, \lambda z. \neural)  \ \ | \ \  \texttt{IsSingleton}(\nodeExt) \\


{\rm Extractor} \ \stringExt ::= & \texttt{ExtractContent}(x) \\
| & \texttt{Substring}(\stringExt, \lambda z. \neural, k) \\
| & \texttt{Filter}(\stringExt, \lambda z. \neural)\\
| & \texttt{Split}(\stringExt, c) \\

 

{\rm Section \ locator} \ \nodeExt ::= & \texttt{GetRoot}(\webpage) \\
| & \texttt{GetChildren}(\nodeExt, \lambda n.\nfilter) \\
| & \texttt{GetDescendants}(\nodeExt, \lambda n. \nfilter) \\
 
{\rm Node \  filter} \ \nfilter ::= &  \texttt{isLeaf}(n) \ | \  \texttt{isElem}(n) \\
   | &  \texttt{matchText}(n, \lambda z. \phi, b) \\
| & \top \ \ | \ \ \nfilter \wedge \nfilter \ \ | \ \ \nfilter \vee \nfilter \ \ | \ \ \neg \nfilter \\
{\rm NLP \ predicate} \ \neural ::= &  \texttt{matchKeyword}(z, \keywordss, t) \\
| &  \texttt{hasAnswer}(z, Q) \\
| &  \texttt{hasEntity}(z, l) \\
| & \top \ \ | \ \ \neural \wedge \neural \ \ | \ \ \neural \vee \neural \ \ | \ \ \neg \neural \\

\end{array}
\]
\vspace{-0.2in}
\caption{DSL for \toolname. Here, $c$ denotes a character (e.g., a  delimiter like comma), and $l$ is an entity type (e.g. Person). Also, $k \in \mathbb{Z}$, $b$ is a boolean, $t \in [0, 1]$ is a threshold.}
\label{fig:dsl}
\vspace{-0.2in}
\end{figure}

As shown in Figure~\ref{fig:dsl}, each \dsl program is a sequence of \emph{guarded expressions} of the form $\guard_i \rightarrow \lambda x. \stringExt_i$ where the \emph{guard} $\guard_i$ locates relevant tree nodes and checks whether they satisfy some property, and  the \emph{extractor} takes as input a set of tree nodes (computed by the guard) and returns a set of strings (see Figure~\ref{fig:dsl-types} for their types).
The program returns the result of  expression $\stringExt_i$ if the corresponding guard $\guard_i$ is true and all previous guards $\guard_1, \ldots, \guard_{i-1}$ evaluate to false.  Intuitively, the guards are used to determine the webpage ``schema''
and locate the relevant tree nodes from which to extract information. Then, the corresponding expression $\stringExt_i$ extracts the relevant text from those nodes. 

In more detail, guards $\guard$ in a \dsl program locate the relevant sections  $N$ of the webpage using so-called \emph{section locators} and check whether nodes $N$ satisfy some predicate. If they do, the located sections $N$ are bound to variable $x$ of the corresponding extractor expression $\stringExt$, and the result of evaluating $\stringExt$ on $N$ is returned. On the other hand, if a guard evaluates to false, then the next guard is evaluated, and this process continues until one of the guards evaluates to true. If all guards evaluate to false, the return value of the program is $\varnothing$. Next, we explain the \dsl constructs in more detail.

\begin{figure}[t]
\footnotesize
\[
\begin{array}{ll}
   \multicolumn{2}{l}{p :: \texttt{Question} \times \texttt{Keywords} \times \texttt{Webpage} \rightarrow \texttt{Set<String>}} \\
    \multicolumn{2}{l}{\guard :: \texttt{Bool} \times \texttt{Set<Node>}} \\
    \stringExt :: \texttt{Set<String>} & z :: \texttt{String} \\
    x :: \texttt{Set<Node>} & n :: \texttt{Node} \\
    \nodeExt :: \texttt{Set<Node>} & \nfilter, \neural :: \texttt{Bool}
\end{array}
\]
\vspace{-0.2in}
\caption{Types of different symbols in the \dsl grammar}
\vspace{-0.2in}
\label{fig:dsl-types}
\end{figure}

\noindent
\paragraph{Pre-trained NLP Models.} Our DSL contains three pre-trained neural models for extracting information from webpages.  These pre-trained models are used inside predicates $\neural$ and include the following primitives:
\begin{itemize}[leftmargin=*]
\item {\bf Keyword match:} Given string $z$, the  $\texttt{\small matchKeyword}(z, K, t)$ predicate evaluates to true if the semantic similarity between  $z$ and  keyword $k$ exceeds  threshold $t \in [0,1]$ for some keyword $k \in K$. 
\item {\bf Question answering:} The   $\texttt{\small hasAnswer}(z, Q)$ predicate returns true if a pre-trained neural network for textual question answering can find the answer to the given question $Q$ in input string $z$. 
\item {\bf Entity matching:} Given string $z$,  $\texttt{\small hasEntity}(z, l)$  returns true if a  neural model for entity matching decides that $z$ contains an entity of type $l$ (e.g., person, location). 
\end{itemize}
These neural primitives draw on standard NLP modeling tools for each of their respective tasks. By using standard tools, we can exploit not only pre-trained vectors \cite{pennington2014glove} and  models such as BERT \cite{bert}, but we can take advantage of training sets created for other tasks like question answering \cite{squad}. This design choice allows us to leverage neural components despite the lack of substantial training data.



\noindent
\paragraph{Section locators.} Our \dsl DSL includes so-called \emph{section locator} constructs $\nodeExt$ for identifying tree nodes from which to extract information. Section locators allow navigating the tree structure and identifying nodes that satisfy a given predicate. In particular, given a webpage $W$, $\texttt{getRoot}(W)$ returns the root node of the webpage, and the recursive {\tt getChildren} and {\tt getDescendants} constructs return respectively the children and descendant nodes satisfying a certain predicate $\nfilter$. 
Predicates on nodes allow testing whether a given node is a leaf ({\tt isLeaf}), whether it is a list/table element ({\tt isElem}), or whether the text contained in that node matches NLP predicate $\neural$ ({\tt matchText}). Note that the third boolean argument of {\tt matchText} specifies whether to consider only text within that node ($b = \texttt{false}$) or whether to consider the text in the entire subtree ($b = \texttt{true}$).

\paragraph{Guards.} As mentioned earlier, guards in our DSL are used for locating  relevant sections within a webpage and testing their properties. In particular, a guard $\guard$ uses section locators to identify relevant nodes  $N$ and then checks their properties via the {\tt IsSingleton} and {\tt  Sat} predicates.  As its name indicates, {\tt IsSingleton} tests whether $N$ contains a single node. Intuitively, this predicate is useful because existing textual question answering systems like  {\tt hasAnswer}  are more likely to be effective if the desired information can be found within a single block of text. 
On the other hand, the {\tt Sat} predicate is used to test whether \emph{any} of the nodes $n \in N$ satisfy some neural classifier $\neural$ --- i.e., ${\tt Sat}(N, \lambda z. \neural)$ checks whether  text $z$ of node $n$ satisfies $\neural$ for some $n \in N$. 


\paragraph{Extractors.} The extractor constructs are used to extract text from relevant sections  $N$ of a webpage. Note that these relevant sections are determined by the corresponding guard and bound to variable $x$ referenced in the extraction construct.  In the simplest case, the {\tt ExtractContent} function returns the string content of each node $n \in N$. The remaining constructs are recursive and allow (a) extracting substrings, (b) filtering elements from a  set, and (c) splitting a string into multiple strings. In particular, ${\tt Substring}(n, \lambda z. \neural, k)$ returns the top-k substrings satisfying neural classifier $\neural$ on $n$'s contents. Similarly, ${\tt Filter}(N, \lambda z. \neural)$ filters those nodes $n$ whose content does not satisfy $\neural$ from set $N$. Finally, ${\tt Split}(n, c)$ generates multiple new substrings by splitting $n$'s content based on the provided delimiter $c$ (e.g., comma).

\section{Optimal Neurosymbolic Synthesis} \label{sec:synthesis}

In this section, we describe our  algorithm for synthesizing  \emph{all} programs that achieve optimal $F_1$ score on a given set of training examples. At a high level, our method is based on enumerative search but employs two ideas that allow it to  scale better: First, we \emph{decompose} the task of synthesizing  extractors from that of synthesizing guards; this decomposition significantly reduces the space of programs we need to consider. Second, we exploit a certain \emph{monotonicity property} of our DSL to prune programs that are guaranteed to be sub-optimal in terms of their $F_1$ score.




Our top-level synthesis algorithm is presented in Figure~\ref{alg:top-level}. Given a few training examples $\examples$, a question $\question$, and keywords $\keywordss$, {\sc Synthesize} returns a \emph{set} of programs that achieve optimal $F_1$ score on $\examples$. At a high level, the algorithm considers all possible ways of partitioning the training examples and synthesizes optimal programs for each partition.\footnote{Since our technique only requires a small set of labeled examples, considering all partitions of $\examples$ is computationally tractable.} Intuitively, each partition corresponds to a different way of assigning guards to webpages in the training set, and the overall synthesis algorithm chooses a partition that yields  the best $F_1$ score among all partitions.

\begin{figure}[t]
\small
\vspace{-10pt}
\begin{algorithm}[H]
\begin{algorithmic}[1]
\Procedure{Synthesize}{$\examples, \question, \keywordss$}
\Statex\Input{training examples $\examples$, question $\question$, and keywords $\keywordss$.}
\Statex\Output{all \dsl programs with optimal $F_1$ score.}
\State $\res \assign \bot$; \ $opt \assign 0$;
\ForAll{$\partition \in \textsf{Partitions}(\examples)$}
\State $bs \assign []$;
\ForAll{$\examples_i \in \partition$}
\State $B \assign \textsc{SynthesizeBranch}(\examples_i, \partition \setminus \cup_{j=1}^i  \examples_j, \question, \keywordss)$;
\State $\emph{bs}.append(B )$;
\EndFor
\If{$\textsf{$F_1$}(bs, \examples) > opt$} 
\State $opt = \textsf{$F_1$}(bs, \examples)$;  $\res \assign \{ bs \} $;
\ElsIf{$\textsf{$F_1$}(bs, \examples) = opt$} 
\State$\res \assign \res \cup \{ bs \} $;
\EndIf
\EndFor
\State\Return $\res$;
\EndProcedure
\end{algorithmic}
\end{algorithm}
\vspace{-0.5in}
\caption{Top-level synthesis algorithm.}
\label{alg:top-level}
\vspace{-0.2in}
\end{figure}

In more detail, the {\sc Synthesize} procedure works as follows. It first generates all possible partitions of the training examples, and then, for  each partition $P = [ \examples_1, \ldots, \examples_n ]$, it generates a set of (optimal) programs  of the form:
\[
\psi_1  \to \lambda x. e_1, \ldots ,  \psi_n \to \lambda x. e_n  
\]
such that examples $\examples_i$  satisfy the $i$'th guard $\psi_i$ and the corresponding extractor $e_i$ achieves optimal $F_1$ score for $\examples_i$. We represent the set of optimal programs for partition $P$ as a list $bs = [B_1, \ldots, B_n]$, where each $B_i$ represents an optimal set of programs for the $i$'th branch.

In particular, a \emph{branch program} $b \in B_i$ is a pair ($\guard, \stringExt$) consisting of a guard and an extractor, and we represent a set of branch programs as a mapping $B_i$ from guards to a set of extractors $E$. 
Thus,  $B_i$ represents all  branch programs $(\guard, \stringExt)$ satisfying the following three properties:
\begin{enumerate}[leftmargin=*]
    \item The guard $\guard$ evaluates to true for all examples in $\examples_i$. 
   \item The guard $\guard$ evaluates to false for $\examples \backslash (\examples_1 \cup \ldots \cup \examples_i)$.\footnote{{Since a guard is only evaluated if previous guards evaluate to false, we only require $\psi$ to differentiate between the current set of examples and the examples that have not yet been considered.}}
   \item The extractor $\stringExt$ achieves optimal $F_1$ score for examples~$\examples_i$.
\end{enumerate}




\paragraph{Synthesizing branch programs.} Next, we consider the {\sc SynthesizeBranch} procedure (Figure~\ref{alg:branch}) for generating optimal branch programs for a given set of examples. As mentioned earlier, there are two important ideas underlying this algorithm: First, we decompose the branch synthesis problem into two separate sub-problems (one for synthesizing guards, and one for synthesizing extractors). Second, we prune the search space by inferring an upper bound on the optimal $F_1$ score that can be achieved by partial branch programs.

In more detail, the {\sc SynthesizeBranch} procedure works as follows. For a given set of positive examples $\examples^+$ and negative examples $\examples^-$, it first synthesizes a guard $\psi$ that separates $\examples^+$ from $\examples^-$ (line 4) and then generates the set of all optimal extractors using $\psi$ (line 8). Note that there may be multiple guards in our DSL that distinguish $\examples^+$ from $\examples^-$. While our algorithm considers all possible guards (loop in lines 3--12), it does so \emph{lazily} --- i.e., it only synthesizes  the next guard after synthesizing optimal extractors for the previous guards. As we will see shortly, such lazy enumeration strategy is useful because it improves the pruning power of the guard synthesis algorithm.

Now, let us consider each iteration of loop in lines 3--12. First, given a guard $\guard$ separating $\examples^+$ and $\examples^-$ (line 4), our technique infers an \emph{upper bound} on the $F_1$ score of any branch program using $\guard$ as its guard. In particular, we can do this because the extractors in our DSL are monotonic with respect to recall: If extractor $e'$ appears as a sub-expression of $e$, then the recall that can be achieved by extractor $e$ cannot be more than that of $e'$. Furthermore, since the extractor operates over the tree nodes $N$ returned by its corresponding guard, the recall can only decrease  with respect to $N$'s contents.

Our algorithm uses this observation at line 6 of  {\sc SynthesizeBranch}  by using the {\sc UB} function for computing an upper bound on branch programs using guard $\psi$. In particular, let $\nodeExt$ denote the section locator used in guard $\guard$. Then, we can obtain an upper bound for any branch program over $\guard$ using the following formula:
\small
\begin{equation}
\textsf{UB}(\nodeExt, \examples) = \frac{ 2\cdot \mathsf{Recall}(\nodeExt, \examples)}{1+ \mathsf{Recall}(\nodeExt, \examples)} \label{eq:ub}
\end{equation}
\normalsize
where  $\mathsf{Recall}(\nodeExt, \examples)$ for a section locator $\nodeExt$ and examples $\examples$ is defined as follows:
\small
\[
 \frac{\{ t  \ | \ t \in \texttt{\small ExtractContent}(\nodeExt(W)), W \in \examples_{\rm in} \} \cap \{t \ | \  t \in \examples_{\rm out}\} }{\{ t \ | \  t \in \examples_{\rm out}\}}
\]
\normalsize
where $t$ represents a token.

That is, our upper bound computation assumes  maximum possible precision (i.e., $1$) and maximum recall for any extractor using section locator $\nu$. Since $\mathsf{UB}(\nodeExt)$ gives an upper bound on the $F_1$ score of any branch program with guard $\guard$, we do not need to consider extractors for $\guard$ if   $\mathsf{UB}(\nodeExt)$ is less than the maximum $F_1$ score encountered so far (line 6).

Assuming $\guard$ is not provably sub-optimal, {\sc SynthesizeBranch} proceeds to construct optimal extractors for the synthesized guard $\guard$ (lines 7--12). To  decompose extractor synthesis from guard inference, we first compute separate input-output examples for the extractor by calling  \textsf{PropagateExamples} at line 7. In particular, this procedure executes the synthesized section locator $\nodeExt$ on the input webpages to obtain new input-output examples $\examples'$ for the extractor and invokes  {\sc SynthesizeExtractors}  on $\examples'$. 
Finally, if the branch programs associated with guard $\guard$ improve upon (or yield the same) $F_1$ score, the result set $R$ is updated.\footnote{Since branches that use guards with the same section locator have the same set of optimal extractors, the calls to {\sc SynthesizeExtractors} can be memoized across different iterations within the {\sc SynthesizeBranch} procedure. We omit this to simplify presentation.}

\begin{figure}[t]
\small
\vspace{-10pt}
\begin{algorithm}[H]
\begin{algorithmic}[1]
\Procedure{SynthesizeBranch}{$\examples^+, \examples^-, \question, \keywordss$}
\Statex\Input{Pos/neg examples $\examples^+, \examples^-$; question $\question$;  keywords $\keywordss$}
\Statex\Output{Branch programs represented as a mapping $R$ from guards to extractors  such that for each $(\psi, E) \in R$ (1) $\psi$ classifies $\examples^+, \examples^-$ and (2) $E$ achieves maximum $F_1$ score for $\examples^+$.}
\vspace{0.05in}
\State $\res \assign \bot$; $opt \assign 0$;
\While{$\text{true}$}
\State $\guard \assign \textsc{GetNextGuard}(\examples^+, \examples^-, Q, K, opt)$;
\If{$\guard = \bot$} \textbf{break};
\EndIf
\If{$\textsf{UB}(\guard.\nodeExt, \examples^+) < opt $} \textbf{continue}; 
\EndIf
\State $\examples' \assign \textsf{PropogateExamples}(\examples^+, \guard, Q, K)$;
\State $(E, F_1) \assign \textsc{SynthesizeExtractors}(\examples',  Q, K, opt)$;
\If{$F_1 > opt$} 
\State $opt \assign F_1$;  \ $\res \assign \{ (\guard, E)  \}$;
\ElsIf{$F_1 = opt$} 
\State $R[\psi] \gets E$
\EndIf
\EndWhile
\State \Return \res
\EndProcedure
\end{algorithmic}
\end{algorithm}
\vspace{-0.4in}
\caption{Algorithm for synthesizing branch programs.}
\label{alg:branch}
\end{figure}

\paragraph{Extractor synthesis.} Next, we describe the {\sc SynthesizeExtractors} procedure (Figure~\ref{alg:extractor}) for finding extractors with optimal $F_1$ score for a given set of input-output examples.  This procedure uses \emph{bottom-up} enumeration with pruning based on $F_1$ scores to reduce the search space. In particular, we use bottom-up rather than top-down enumeration because doing so allows us to  more easily exploit the monotonicity property of the DSL with respect to recall.


In more detail, {\sc SynthesizeExtractors} maintains a worklist $\worklist$ of complete extractors; and, in each iteration, it dequeues one extractor and expands it by applying all possible grammar productions for {\tt Substring}, {\tt Filter}, and {\tt Split} (line 8). A new extractor $\stringExt'$ is  added to the worklist only if $\textsc{UB}(\stringExt', \examples)$ (i.e., $F_1$ score upper bound for $\stringExt'$) is greater than or equal to the previous upper bound $s_o$ (line 9). As described earlier, we compute an upper bound on extractors generated from $e'$ by using $1$ for precision and the recall of $e'$ on the given set of examples. As before, this pruning strategy exploits the fact that if $e_1$ is a subprogram of $e_2$, then $\mathsf{Recall}(e_1, \examples) \geq \mathsf{Recall}(e_2, \examples)$ for any set of examples $\examples$. 

\begin{figure}
\small
\vspace{-10pt}
\begin{algorithm}[H]
\begin{algorithmic}[1]
\Procedure{SynthesizeExtractors}{$\examples,  \question, \keywordss, opt$}
\Statex\Input{Examples $\examples; $question $\question$;  keywords $\keywordss$}
\Statex\Input{Lower bound $opt$ on $F_1$}
\Statex\Output{Extractors $E_o$ with optimal $F_1$ score $\fonescore_{o}$ on $\examples$.}
\State $E_{o} \assign \emptyset$; \ $\fonescore_{o} \assign opt$;
\State $\worklist \assign \{ \texttt{ExtractContent}(x) \}$;
\While{$\worklist \neq \emptyset$}
\State $\stringExt \assign \worklist.remove()$; \ $\fonescore \assign \textsf{$F_1$}(\stringExt, \examples)$;
\If{$\fonescore > \fonescore_{o}$}
$E_{o} \assign \{ \stringExt \}$; $\fonescore_{o} \assign \fonescore$;
\ElsIf{$\fonescore = \fonescore_{o}$} $E_{o}.add(\stringExt)$;
\EndIf
\ForAll{$\stringExt' \in \textsf{ApplyProduction}(\stringExt)$}
\If{$\textsc{UB}(\stringExt', \examples) \geq \fonescore_{o}$}
$\worklist.add(\stringExt')$;
\EndIf
\EndFor
\EndWhile
\State\Return $(E_{o}, \fonescore_{o})$;
\EndProcedure
\end{algorithmic}
\end{algorithm}
\vspace{-30pt}
\caption{Optimal extractor synthesis.}
\label{alg:extractor}
\vspace{-0.1in}
\end{figure}

\paragraph{Lazy synthesis of guards.} The final missing piece of our synthesis algorithm is the {\sc GetNextGuard} procedure (Figure~\ref{alg:guard}) for lazy guard synthesis. In particular, this algorithm is \emph{lazy} in the sense that it yields a single guard at a time rather than returning the set of all guards separating $\examples^+$ from $\examples^-$. Since the guard synthesis algorithm also prunes its search space by computing an upper bound on $F_1$ scores, this lazy enumeration strategy improves  pruning power as the optimal $F_1$ score improves over time. However, despite the lazy nature of the guard synthesis algorithm, our technique is still guaranteed to return all optimal programs.

The guard synthesis algorithm (Figure~\ref{alg:guard})  is  similar to {\sc SynthesizeExtractors} and also performs bottom-up search with pruning. In particular, it maintains a worklist $\worklist$ of section locators. In each iteration, it dequeues one of the section locators $\nodeExt$ and generates all possible guards using $\nodeExt$ (up to some bound). If any of the resulting guards $\guard$ is a classifier between $\examples^+$ and $\examples^-$, then it is returned as the next viable guard.  Once the algorithm enumerates all possible classifiers using the section locator $\nodeExt$, it then generates all possible section locators derived from $\nodeExt$ by using the productions {\tt GetChildren} and {\tt GetDescendants} (lines 7-8). As in the previous algorithms, {\sc GetNextGuard} also computes an upper bound on  $F_1$ score  and adds a new section locator $\nodeExt'$ to the worklist if ${\sf UB}(\nodeExt', \examples^+)$ is no worse than the previous optimal $F_1$ score~$opt$ (line 8).

\begin{theorem}
Let $\examples, Q, \keywordss$ be inputs to the {\sc Synthesize} procedure and let  $p$ be a \dsl  program. Then, the set of programs returned by {\sc Synthesize}($\examples, Q, \keywordss$) includes $p$ if and only if, for any other  \dsl program $p'$,   $F_1(p)  \geq F_1(p')$. 
\end{theorem}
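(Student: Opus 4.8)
The plan is to prove both directions of the biconditional simultaneously by establishing three facts about \textsc{Synthesize}: (i) the underlying enumeration ranges over \emph{every} DSL program, modulo the pruning tests; (ii) every pruning test is \emph{sound}, in that it discards a partial program only when all of its completions are strictly worse than the best score found so far; and (iii) the algorithm retains exactly those fully-constructed programs whose $F_1$ ties the running maximum. Granting these, soundness (every returned $p$ is optimal) is immediate, and completeness (every optimal $p$ is returned) follows because no optimal program can be pruned. The linchpin for (ii) and (iii) is the loop invariant that the running threshold ($opt$ in \textsc{Synthesize}/\textsc{SynthesizeBranch}, $\fonescore_o$ in \textsc{SynthesizeExtractors}) is always realized by some actual program and hence never exceeds the true optimum; since every pruning test uses a \emph{strict} inequality against this threshold, a program achieving the global optimum can never trip it.

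First I would formalize \emph{sound pruning} from the DSL's monotonicity. Writing $F_1(p,r)=2pr/(p+r)$ as a function of precision $p$ and recall $r$, a short calculation shows it is nondecreasing in each argument. Any completion of a partial extractor $e'$ (resp.\ any branch program over section locator $\nodeExt$) has precision at most $1$, and, by the recall-monotonicity property of the DSL, recall at most $\mathsf{Recall}(e',\examples)$ (resp.\ $\mathsf{Recall}(\nodeExt,\examples)$), because wrapping an extractor in \texttt{Substring}/\texttt{Filter}/\texttt{Split} or descending further in the tree can only drop output tokens, never add the missing gold tokens back. Hence $F_1 \le F_1(1, \mathsf{Recall}(\cdot,\examples)) = \textsf{UB}(\cdot,\examples)$, so \textsf{UB} is a genuine upper bound on every completion. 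This justifies each pruning site: line~6 and line~9 of the two synthesis loops, and the section-locator test in \textsc{GetNextGuard}, all discard only partial programs whose every completion has $F_1 <$ threshold $\le$ optimum.

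Next I would establish \emph{exhaustiveness} and lift it compositionally. \textsc{SynthesizeExtractors} performs bottom-up search seeded by $\texttt{ExtractContent}(x)$ and closed under \textsf{ApplyProduction}, so every syntactic extractor is reachable; dually, \textsc{GetNextGuard} reaches every guard by enumerating section locators from $\texttt{GetRoot}(\webpage)$ under \texttt{GetChildren}/\texttt{GetDescendants} and then all guards over each. Combined with sound pruning and the surviving-prefix observation---every subexpression $e'$ of an optimal extractor $e$ satisfies $\textsf{UB}(e',\examples)\ge F_1(e)=\text{optimum}\ge$ threshold, so no prefix of $e$ is ever pruned---this shows each optimal extractor (and guard) is in fact enumerated. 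I would then argue that \textsc{SynthesizeBranch} returns \emph{exactly} the optimal branch programs for a given $(\examples^+,\examples^-)$: the two guard conditions (true on $\examples^+$, false on the not-yet-covered examples) precisely encode the routing, \textsc{PropagateExamples} is correct in the sense that the extractor's $F_1$ on the propagated examples equals the branch program's $F_1$ on $\examples^+$, and the lazy, threshold-driven guard enumeration preserves completeness because $opt$ only increases. Finally, since every DSL program induces a partition of $\examples$ (routing each webpage to its first satisfied guard) and \textsc{Synthesize} ranges over \emph{all} partitions, every program is covered by some iteration.

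The step I expect to be the main obstacle is the reduction of the \emph{global} $F_1$ objective to \emph{per-branch} optimization. \textsc{SynthesizeBranch} optimizes the $F_1$ of each branch on its own examples, whereas \textsc{Synthesize} scores the composed program by its joint $F_1$ on all of $\examples$; these coincide only if the objective decomposes additively across the partition induced by an optimal program. This holds cleanly when $F_1$ is aggregated per webpage, since then the joint score is a sum of independent per-branch contributions and branchwise maxima compose; it is considerably more delicate under a single aggregated precision/recall, because $F_1=2pr/(p+r)$ is \emph{not} additive and a branch extractor with higher local $F_1$ but lower precision can degrade the joint score. I would therefore isolate a decomposition lemma---that, holding the routing fixed, the joint $F_1$ is maximized by maximizing each branch independently---and this is the place where the proof is most fragile and where an explicit assumption on how $F_1$ is aggregated is required; the correctness of \textsc{PropagateExamples} and the interaction of the increasing threshold with the lazy guard enumeration are secondary technical points that fit cleanly into the inductive argument.
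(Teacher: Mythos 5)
Your proposal follows essentially the same route as the paper's own proof: a lemma that $\textsf{UB}$ overestimates $F_1$ because precision is capped at $1$, a pruning-soundness theorem driven by recall-monotonicity of extensions, exhaustiveness of the bottom-up enumerations for extractors and (lazily) for guards, and a compositional lift through \textsc{SynthesizeBranch} and the enumeration of all partitions. The one place you go beyond the paper is also the most valuable part of your write-up: the decomposition of the \emph{global} $F_1$ objective into independent per-branch optimizations. The paper dispatches this in a single sentence (``Since each branch covers a disjoint set of examples, the collection of all optimal branch programs \ldots forms a set of optimal top-level programs''), which is exactly the unproved decomposition lemma you isolate. Your observation is substantively correct: if $F_1$ is macro-averaged per webpage the joint score is a sum of per-branch contributions and branchwise maxima compose, but under micro-averaged (pooled) precision and recall the claim can fail, since a branch with higher local $F_1$ but a different precision/recall trade-off can change the pooled score; one can construct small counterexamples where two locally tied extractors yield different global $F_1$. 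So your proof is not weaker than the paper's --- it makes explicit an aggregation assumption that the paper's argument silently relies on, and your treatment of $\textsf{PropagateExamples}$ (requiring that the extractor's $F_1$ on the propagated examples equal the branch's $F_1$ on $\examples^+$) likewise names a step the paper leaves implicit.
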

\label{thm:1}

\begin{figure}
\vspace{-0.15in}
\small
\begin{algorithm}[H]
\begin{algorithmic}[1]
\Procedure{GetNextGuard}{$\examples^+, \examples^-, \question, \keywordss, opt$}
\Statex\Input{Pos/neg examples $\examples^+,\examples^-$}
\Statex\Input{Question $\question$; keywords $\keywordss$;  lower bound $opt$ on $F_1$ score}
\Statex\Output{Next guard that classifies $\examples^+$ and $\examples^-$.}
\vspace{0.05in}
\State $\worklist \assign \{ \texttt{GetRoot}(\webpage) \}$;
\While{$\worklist \neq \emptyset$}
\State $\nodeExt \assign \worklist.remove()$;
\ForAll{$\guard \in \textsf{GenGuards}(\nodeExt)$}
\If{$\forall \examplee_{in} \in \examples^+. \semantics{\guard(\examplee_{in},\question,\keywordss)}$ 
$\wedge$
\Statex \ \ \ \ \ \ \ \ \ \ \ \ \ \ \ \ \ \ \ \ \ \ $\forall \examplee_{in} \in \examples^-. \neg \semantics{\guard(\examplee_{in},\question,\keywordss)}$} \textbf{yield} $\guard$;
\EndIf
\EndFor
\ForAll{$\nodeExt' \in \textsf{ApplyProduction}(\nodeExt)$}
\If{$\textsf{UB}(\nodeExt', \examples^+) \geq opt$}
$\worklist.add(\nodeExt')$;
\EndIf
\EndFor
\EndWhile
\State \Return $\bot$;
\EndProcedure
\end{algorithmic}
\end{algorithm}
\vspace{-30pt}
\caption{Lazy enumeration of guards.}
\label{alg:guard}
\vspace{-0.15in}
\end{figure}

\section{Program Selection via Transductive Learning}~\label{sec:self-supervision}

Our optimal synthesis algorithm outputs all programs that have optimal $F_1$ score on the labeled training data. However, 
not all of these programs generalize well to new inputs. 
In this section, we present a technique based on transductive learning \cite{zhu05survey} that selects a program that generalizes well beyond the training examples. 

We describe our algorithm (summarized in Figure~\ref{alg:selects}) for selecting a program that generalizes well to the test set. At a high level, our selection method must satisfy two objectives. First, it should select a program that generalizes well in terms of $F_1$ score. Empirically, we observe that a large fraction of optimal programs achieve  good $F_1$ score on the test set, so a randomly chosen program has good $F_1$ score on average. However, we also want to minimize variance---in many cases, a sizable fraction of programs perform quite poorly, so a randomly chosen program may have  poor $F_1$ score. Our selection method is designed to select a good program while avoiding these poorly performing programs.

The key concept underlying our approach is to use an \emph{ensemble} of the optimal programs $\Pi^*$ generated by our synthesis algorithm; that is, we aggregate predictions over a large random sample of optimal programs. Such an ensemble would address both of the above points. Ensembles of multiple models typically generalize better than the individual models since the errors made by individual models tend to average out~\cite{opitz1999popular}. For the same reason, they also tend to reduce the variance in performance~\cite{dietterich2000ensemble}. However, directly using an ensemble instead of an individual program has a few drawbacks. First,  an ensemble is significantly less interpretable than an individual model. Second, since an ensemble includes many programs, there is a large computational cost to using the ensemble  if the learned model is to be used over and over again.

%

\begin{figure}[t]
\vspace{-10pt}
\small
\begin{algorithm}[H]
\begin{algorithmic}[1]
\Procedure{Select}{$\examples, \inputexamples, \Pi^*$}
\Statex\Input{training examples $\examples$, unlabeled input examples $\inputexamples$, optimal programs $\Pi^*$.}
\Statex\Output{optimal program $\pi^*$ on the transductive learning objective.}
\State draw i.i.d. samples $\pi_1,...,\pi_{\mathcal{N}}\sim\Pi^*$;
\State construct ensemble $\Pi_E=\{\pi_1,...,\pi_{\mathcal{N}}\}$;
\State compute the output $\outputexamples_j\in\Pi_E$ for $\pi_j$ according to Eq.~\ref{eqn:ensembleoutputs};
\State compute $L(\pi)=\sum_{j=1}^{\mathcal{N}}L(\pi;\inputexamples,\outputexamples_j)$ for each $\pi\in\Pi_E$;
\State \Return $\pi^*=\operatorname*{\arg\max}_{\pi\in\Pi_E}L(\pi)$;
\EndProcedure
\end{algorithmic}
\end{algorithm}
\vspace{-35pt}
\caption{Program Selection Algorithm.}
\label{alg:selects}
\end{figure}

Thus, our algorithm first builds the ensemble and then \emph{compresses} it into a single program by leveraging  the \emph{unlabeled} training data. In particular, it constructs an ensemble $\Pi_E$ by sampling $\mathcal{N}$ optimal programs returned by the synthesis algorithm (lines 2--3). Then, it uses the ensemble to generate soft labels for the unlabeled webpages and returns the program $\pi^*$ that minimizes loss $L(\pi^*)$ with respect to these soft labels. We describe our approach in more detail below. For conciseness, we give a high-level sketch of our derivations, and provide details in the Appendix.



\paragraph{Transductive learning objective.}
Given (i) labeled examples $\examples$, (ii) unlabeled inputs $\inputexamples$, and (iii) optimal programs $\Pi^*$ returned by  {\sc Synthesize}, our algorithm finds a program $\pi \in \Pi^*$ that minimizes the following objective:
\begin{align}
\label{eqn:selfsupervisedobj}
\tilde{L}(\pi;\examples,\inputexamples)=\mathbb{E}_{p(\outputexamples\mid\inputexamples,\examples)}[L(\pi;\inputexamples,\outputexamples)].
\end{align}
The expression $L(\pi;\inputexamples,\outputexamples)$ is a loss function we wish to minimize in a standard supervised learning fashion using $(\inputexamples, \outputexamples)$ as the training dataset---e.g., we could take it to be the negative $F_1$ score. However, the difficulty is that we do not know the labels $\outputexamples$ for the inputs $\inputexamples$. Thus, we take the expectation with respect to a distribution $p(\outputexamples\mid\inputexamples,\examples)$ that leverages information from the labeled examples. As described in detail below, this distribution is constructed by using an ensemble of programs synthesized based on $\examples$ to assign soft ``pseudo-labels'' to $\inputexamples$.


\paragraph{{Generating} labels via program ensembling.}

Next, we describe how to construct the distribution $p(\outputexamples\mid\inputexamples,\examples)$. We do this in two steps: first, by defining a distribution $p(\pi' \mid \examples)$ over (optimal) programs conditioned on the input data, then using the fact that these programs are deterministic to compute $p(\outputexamples \mid \pi', \inputexamples)$.

We define our distribution $p(\pi'\mid\examples)$ to assign probability mass \emph{only} to programs in $\Pi^*$, those that best satisfy the given examples $\examples$.\footnote{We note that other choices are possible (e.g., prioritizing smaller programs, including some probability on erroneous programs, etc.); we found this choice to work well empirically.}
Ideally, we could use
the uniform distribution over optimal programs $\Pi^*$. However, a key difficulty is that summing over all programs $\pi\in\Pi^*$ is intractable since the cardinality of $\Pi^*$ is too large in practice. Thus, we instead approximate this distribution by constructing an ensemble $\Pi_E=\{\pi_1,...,\pi_{\mathcal{N}}\}$, where $\pi_i\sim\text{Uniform}(\Pi^*)$ are i.i.d. samples and where $\mathcal{N} \in \mathbb{N}$ is a hyperparameter, and then using
\small
\begin{align}
\label{eqn:selfsuperviseddist}
p(\pi\mid\examples)=\frac{\mathbbm{1}(\pi\in\Pi_E)}{\mathcal{N}}
\end{align}
\normalsize

Finally, given this distribution, we have
\small
\begin{align}
\label{eqn:derivation}
p(\outputexamples\mid\inputexamples,\examples)
&= \sum_{\pi'\in\Pi}p(\pi'\mid\examples)\cdot p(\outputexamples\mid\pi',\inputexamples) \\
&= \sum_{\pi'\in\Pi}p(\pi'\mid\examples)\cdot\prod_{k=1}^K \mathbbm{1}(\outputexamplee_k=\pi'(\inputexamplee_k)) \nonumber
\end{align}
\normalsize
where the second step follows from the fact that our programs are deterministic and each place probability 1 over a single output.

\paragraph{Program selection.}

Finally, our algorithm aims to select
\small
\begin{align}
\label{eqn:selectedprog}
\pi^*=\operatorname*{\arg\min}_{\pi\in\Pi_E}\tilde{L}(\pi;\examples,\inputexamples).
\end{align}
\normalsize
i.e., the program that minimizes the loss with respect to the ensemble. 
We now describe how to evaluate $\tilde{L}(\pi;\examples,\inputexamples)$. 

First, we precompute the possible outputs
\small
\begin{align}
\label{eqn:ensembleoutputs}
\outputexamples_j=(\pi_j(\inputexamplee_1),...,\pi_j(\inputexamplee_K))\qquad(\forall\pi_j\in\Pi_E),
\end{align}
\normalsize
in which case we have
\small
\begin{align*}
p(\outputexamples\mid\inputexamples,\examples)
&=\frac{1}{\mathcal{N}}\sum_{\pi'\in\Pi_E}\prod_{k=1}^K\mathbbm{1}(\outputexamplee_k=\pi'(\inputexamplee_k)) \\ &=\frac{1}{\mathcal{N}}\sum_{j=1}^{\mathcal{N}}\mathbbm{1}(\outputexamples=\outputexamples_j).
\end{align*}
\normalsize
In other words, when evaluating $p(\outputexamples\mid\inputexamples,\examples)$, we only need to account for outputs $\outputexamples_j$ according to programs $\pi_j\in\Pi_E$. Thus, we have:
\small
\begin{align}
\tilde{L}(\pi;\examples,\inputexamples)
&=\sum_{\outputexamples}p(\outputexamples\mid\inputexamples,\examples)\cdot L(\pi;\inputexamples,\outputexamples)\\  &=\frac{1}{\mathcal{N}}\sum_{j=1}^{\mathcal{N}}L(\pi;\inputexamples,\outputexamples_j). \label{eqn:derivationgoal}
\end{align}
\normalsize
Substituting into Eq.~\ref{eqn:selectedprog}, our algorithm selects the program
\begin{align}
\label{eqn:selectedprog2}
\pi^*=\operatorname*{\arg\min}_{\pi\in\Pi_E}\sum_{j=1}^{\mathcal{N}}L(\pi;\inputexamples,\outputexamples_j),
\end{align}
which is equivalent to Eq.~\ref{eqn:selectedprog} since $\mathcal{N}$ is a positive constant.

\section{Implementation}\label{sec:impl}

 In this section, we provide  implementation details about different components of \toolname.

\paragraph{Parsing.} As explained in Section~\ref{sec:prelim}, \toolname represents each webpage as a tree that captures relationships between text elements on the renderd version of the webpage. Thus, \toolname first parses a given HTML document into our internal representation. To do this, we first extract the DOM tree representation using the \emph{BeautifulSoup4} HTML parser and remove unnecessary elements such as images and scripts. Then, when converting to our tree representation, we follow the standard HTML header hierarchy. In particular, the H1 header corresponds to the root node, and $H_{i+1}$ headers are represented as children nodes of the $H_i$ headers. 




\paragraph{Interactive labeling.} Rather than asking users to directly provide labeled webpages, \toolname
 actually interacts with users and  suggests webpages to label. The goal here is to minimize the number of user annotations while ensuring that the labels achieve good coverage of different schemas in the test set. To do so, \toolname clusters webpages based on various features, including which section locator constructs in our DSL yield non-empty answers, the type of entities contained in the extracted sections, the layout of extracted sections etc. We then identify webpages that are similar to and different from the webpages labeled so far and ask the user to label these additional webpages. In practice, we restrict the number of user queries to at most five.

\paragraph{Neural modules in the DSL} Our tool leverages several existing natural language processing frameworks and models to implement the neural modules. For QA-related constructs, we use the BERT QA system  \cite{bert} as the underlying model. Specifically, we use the version that has been fine-tuned on the SQUAD dataset\footnote{The link to the model: https://huggingface.co/bert-large-uncased-whole-word-masking-finetuned-squad. }\cite{squad}. We use Sentence-BERT~\cite{sbert} to generate sentence embeddings for keyword similarity,  and we employ  Spacy\footnote{https://spacy.io/. Specifically, we use the ``en\_core\_web\_md'' model.}\cite{spacy2} for named entity extraction and sentence segmentation. Since the keyword matching module requires a real-valued threshold $t \in [0,1]$, our synthesis algorithm discretizes  it  using a step size of $0.05$.

\paragraph{Transductive learning loss. } {Recall that our program selection technique proposed in Section~\ref{sec:self-supervision} is parametrized over a loss function $L(\pi; \mathcal{I}, \mathcal{O})$. In our implementation, we take our loss function to be the Hamming distance between the sets of words extracted by each program. In particular, we use loss function: $L(\pi; \mathcal{I}, \mathcal{O}) = {\tt Hamming}(\pi(\mathcal{I}), \mathcal{O}).$}

\paragraph{Hyperparameters.} The {\toolname} system has certain hyper-parameters that control the maximum depth of synthesized programs. By default, the hyper-parameter for guard depth is set to $7$ and the one for extractor depth is set to be $5$.  There is also another hyper-parameter (with a default value of $1000$) that controls the number of programs used to construct an ensemble during transductive learning.

\section{Evaluation}~\label{sec:eval}
In this section, we describe a series of experiments that are designed to answer the following research questions:

\begin{itemize}[leftmargin=*]
    \item {\bf RQ1.} How does \toolname's performance compare against  other question answering and information extraction tools?
    \item {\bf RQ2.} How important are the synthesis techniques proposed in Section~\ref{sec:synthesis}?
\item {\bf RQ3.} Is the  program selection technique based on transductive learning (Section~\ref{sec:self-supervision}) useful in practice?
\end{itemize}

\paragraph{Benchmarks} To answer these questions, we evaluate \toolname on 25 different tasks across four different domains, namely   faculty profiles, computer science conferences,  university courses, and clinic websites. For each domain, we collect approximately 40 webpages and evaluate the performance of each tool in terms of $F_1$ score, precision, and recall. For each task, out of around 40 webpages, around 5 of them are used for training (i.e. synthesis) and the remaining is the test set. Table~\ref{tab:task} describes the 25 tasks used in our evaluation.

\paragraph{Experimental Setup} All of our experiments are conducted
on a machine with Intel Xeon(R) W-3275 2.50 GHz CPU
and 16GB of physical memory, running the Ubuntu 18.04
operating system with a NVIDIA Quadro RTX8000 GPU. 

\begin{table}[]
    \centering
    \scriptsize
    \caption{Description of the tasks used in evaluation.}
    \begin{tabular}{|c|l|}
    \hline
    Domain & Description  \\
    \hline
    \multirow{8}{*}{Faculty} & Extract  current PhD students \\
     & Extract  conference publications at PLDI  \\
     & Extract  courses they have taught   \\
     & Extract those papers that received a Best Paper Award \\
     & Extract  program committees they have served on \\
     & Extract  conference papers  they published in 2012 \\
     & Extract  co-authors among all papers published at PLDI \\
     & Extract  formerly advised students \\
     \hline
     \multirow{6}{*}{Conference} & Extract  program committee members \\
     & Extract  program chairs \\
     & Extract the topics of interest  \\
     & Extract the paper submission deadlines \\
     & Extract whether the conference is single-blind or double-blind \\
     & Extract  institutions PC members are from \\
     \hline
     \multirow{6}{*}{Class} & Extract the name of instructors \\
     & Extract the time of the lectures \\
     & Extract the name of teaching assistants \\
     & Extract the date of the exams \\
     & Extract  information about  textbooks \\
     & Extract information on how   grades are assigned \\
     \hline
     \multirow{5}{*}{Clinic} & Extract the doctors or providers \\
     & Extract the provided services \\
     & Extract the types of treatments they specialize in\\
     & Extract the accepted insurances \\
     & Extract the locations \\
     \hline
    \end{tabular}
    \vspace{-10pt}
    \label{tab:task}
\end{table}

\subsection{Comparison with Other Tools}

To answer our first research question, we compare \toolname\ against the following baselines:
\begin{itemize}[leftmargin=*]
\item {\sc BERTQA}~\cite{bert}: This is a state-of-the-art textual question answering system that takes as input an entire webpage and a question and outputs the answer.~\footnote{We also tried fine-tuning this model using the labels in our training examples; however, we do not report results for the fine-tuned model since 
its result is
actually worse compared to ~\cite{bert}.}. 
\item {\sc HYB}~\cite{raza2020web}: This is a programming-by-example system that takes a set of webpages as input and synthesizes XPath programs for data extraction. 
\item {\sc EntExtract}~\cite{liang2014}: This is a zero-shot entity extraction tool for webpages using a natural language query as input. 
\end{itemize}

Note that these baselines do not address \emph{exactly} the same problem addressed by \toolname in that they take fewer inputs. Thus, while our comparison is not completely apples-to-apples, these systems are the closest ones to \toolname for performing a comparison.
 
\begin{figure}
    \centering
    \definecolor{independence}{RGB}{121,125,98}
\definecolor{heliotropegray}{RGB}{154,140,152}
\definecolor{silverpink}{RGB}{201,173,167}
\definecolor{isabelline}{RGB}{242,233,228}

\definecolor{usafablue}{RGB}{0,78,137}
\definecolor{orangecrayola}{RGB}{255,107,53}
\definecolor{peachcrayola}{RGB}{247,197,159}
\definecolor{beige}{RGB}{239, 239, 208}

\definecolor{darkbluegray}{RGB}{102,106,134}
\definecolor{shadowblue}{RGB}{120,138,163}
\definecolor{opal}{RGB}{146,182,177}
\definecolor{laurelgreen}{RGB}{178,201,171}
\definecolor{dutchwhite}{RGB}{232,221,181}
\begin{tikzpicture}[scale=0.9]
\begin{axis}[
    ymin=0,
    ymax=0.8,
    ybar,
    enlarge x limits=0.20,
    legend style={at={(0.5,-0.15)},
      anchor=north,legend columns=-1},
    ylabel={Avg score},
    symbolic x coords={$F_1$,Precision,Recall},
    xtick=data,
    legend image code/.code={
        \draw [#1] (0cm,-0.1cm) rectangle (0.2cm,0.25cm); },
    ]
\addplot[black,fill=darkbluegray
] coordinates {($F_1$, 0.70) (Precision, 0.69) (Recall, 0.73)};
\addplot[black,fill=opal
] coordinates {($F_1$, 0.25) (Precision, 0.47) (Recall, 0.17)};
\addplot[black,fill=peachcrayola
] coordinates {($F_1$, 0.05) (Precision, 0.34) (Recall, 0.03)};
\addplot[black,fill=dutchwhite
] coordinates {($F_1$, 0.08) (Precision, 0.06) (Recall, 0.15)};
\legend{\toolname,{\sc BERTQA},{
    \sc HYB}, {\sc EntExtract}}
\end{axis}
\end{tikzpicture}
    \caption{Comparison between \toolname and other tools }
    \vspace{-10pt}
    \label{fig:baselines}
\end{figure}

\begin{table*}[]
\small
    \centering
    \caption{Evaluation results for each baseline per domain. P stands for Precision and R means Recall.}
    \begin{tabular}{|c|ccc|ccc|ccc|ccc|}
    \hline
    \multirow{2}{*}{Domain} & \multicolumn{3}{c|}{\toolname} & \multicolumn{3}{c|}{{\sc BERTQA}} & \multicolumn{3}{c|}{{\sc HYB}} & \multicolumn{3}{c|}{{\sc EntExtract}} \\
    & P & R & $F_1$ & P & R & $F_1$ & P & R & $F_1$ & P & R & $F_1$ \\
    \hline
    Faculty & 0.72 & 0.80 & 0.75 & 0.44 & 0.08 & 0.18 & 0.48 & 0.02 & 0.04 & 0.02 & 0.14 & 0.04 \\
    Conference & 0.71 & 0.69 & 0.70 & 0.58 & 0.31 & 0.32 & 0.26 & 0.02 & 0.03 & 0.07 & 0.20 & 0.09 \\
    Class & 0.63 & 0.77 & 0.68 & 0.55 & 0.26 & 0.31 & 0.18 & 0.04 & 0.04 & 0.04 & 0.09 & 0.05 \\
    Clinic & 0.71 & 0.62 & 0.66 & 0.31 & 0.02 & 0.04 & 0.42 & 0.06 & 0.09 & 0.14 & 0.20 & 0.16 \\
    \hline
    \end{tabular}
    \label{tab:domain-results}
\end{table*}

Our main results are summarized in \figref{fig:baselines}, and Table~\ref{tab:domain-results} shows a more detailed breakdown of results across our four domains. As we can see from \figref{fig:baselines}, \toolname outperforms all three baselines in terms of average $F_1$ score, precision, and recall, and, accoring to Table~\ref{tab:domain-results}, these results hold across all four domains.  Among the three other tools, {\sc BertQA} has the best performance; however, it has significantly worse recall and $F_1$ score compared to \toolname. 

\paragraph{Failure analysis for the baselines.} We briefly explain why the baseline systems perform poorly in our evaluation.  As mentioned earlier, a textual QA system like  {\sc BERTQA} fails to take advantage of the inherent structure in webpages and performs particularly poorly on tasks that require extracting multiple different spans from the input webpage. On the other hand, {\sc EntExtract} does leverage the tree structure of the webpage but we found that it often returns irrelevant answers (e.g., publications instead of students). We believe this is because {\sc EntExtract} generates extraction predicates based on XPath queries, but most of our tasks are difficult to solve using simple XPath programs.  Finally, {\sc HYB} tries to synthesize programs that exactly match the provided labels (i.e., perfect $F_1$ score); however, since such programs do not exist for many tasks, synthesis fails in several cases. 



\paragraph{Failure analysis for \toolname} There are two tasks on which \toolname does not significantly outperform the {\sc BERTQA} baseline. One of these tasks is extracting conference submission deadlines, and the other one is determining whether a conference is double-blind or not. For these two tasks, the program synthesized by \toolname essentially returns the output of the QA model; hence, it does not outperform {\sc BERTQA}.  



\subsection{Evaluation of the Synthesis Engine}
In this section, we describe an ablation study that quantifies the impact of the proposed synthesis techniques from Section~\ref{sec:synthesis}. In particular, recall that our synthesis algorithm incorporates two key ideas--decomposition and pruning based on $F_1$ score. To evaluate the relative importance of these ideas, we consider the following two ablations of \toolname:
\begin{itemize}[leftmargin=*]
    \item \toolnameD: This is a variant of \toolname that synthesizes guards and extractors jointly. In other words, it does not decompose the synthesis problem into two separate guard synthesis and extractor synthesis sub-tasks. 
    \item \toolnameP: This variant does not compute an upper bound on the $F_1$ score of partial programs. Thus, it is unable to prune partial programs from the search space. 
\end{itemize}

The results of this ablation study are presented in \figref{tab:pruning}. Here, the first column shows average synthesis time (in seconds) for all three variants, and the second column shows the average speedup of \toolname over its two ablations. As we can see, both decomposition and $F_1$-based pruning play a significant role in improving synthesis time.  In particular, pruning improves synthesis time by a factor of $3.6$ and decomposition improves it by a factor of $2.4$.  Note that we do not report $F_1$ scores in Table~\ref{tab:pruning} since all variants synthesize the same programs but  differ in how long they take to do so. 


\begin{table}[]
    \small
    \centering
    \caption{Results of the ablation  study. This table shows the average training time and the average speedup that \toolname achieves compared to the other two techniques. }
    \begin{tabular}{|c|c|c|}
    \hline
    Technique & Avg time (s) & Avg Speedup \\ 
    \hline
     \toolname & 419 & -  \\
    {\small \toolnameP} & 1351 & 3.6 \\ 
   {\small   \toolnameD} & 931 & 2.4 \\ 
    \hline
    \end{tabular}
    \label{tab:pruning}
    \vspace{-0.2in}
\end{table}

\subsection{Effectiveness of the Transductive Learning}
In this section, we investigate the usefulness of the transductive learning technique from Section~\ref{sec:self-supervision} by comparing against two simpler baselines:

\begin{itemize}[leftmargin=*]
    \item {\sc Random}: This baseline chooses uniformly at random one of the optimal programs for the training examples. 
    \item {\sc Shortest}: This baseline   chooses uniformly at random one of the \emph{smallest} programs  (in terms of AST size) that  optimize $F_1$ score on the training examples. 
    
    
\end{itemize}

Recall that the transductive learning technique from Section~\ref{sec:self-supervision} is both intended to reduce variance and produce better-quality results on the test set. Thus, we compared \toolname against two baselines in terms of the following two metrics\footnote{In the experiment, these two metrics are computed based on 20 runs.}.

\begin{itemize}[leftmargin=*]
    \item {Mean:} We report percentage improvement of the transductive learning technique in terms of average $F_1$ score over the two baselines.
    \item {Variance:} We also report  the average reduction in variance.

\end{itemize}

As we can see from Table~\ref{fig:self-supervised}, the transductive learning technique dramatically reduces variance and modestly improves average $F_1$ score. Thus, by using our proposed transductive learning technique, \toolname achieves much more stable performance (in terms of the quality of the synthesized programs) compared to these other approaches. 

\begin{table}[]
    \small
    \centering
    \caption{Evaluation of  transductive learning. This table shows the \% of improvement in $F_1$ and the reduction in variance of \toolname  compared to the other two techniques. }
    \label{fig:self-supervised}
    \begin{tabular}{|c|c|c|}
    \hline
      Technique   &  \% Improvement in $F_1$ & Reduction in Variance  \\
    \hline
    {\sc Random} & 6.0\% & 1550$\times$ \\
    {\sc Shortest} & 6.3\% & 1570$\times$ \\
    \hline
    \end{tabular}
\end{table}

\noindent 
\paragraph{Remark.} Appendix C presents additional ablation studies evaluating the impact of the different  input modalities as well as the number of training examples. 

\section{Related Work}\label{sec:related}

\paragraph{Program synthesis for webpages}

There is  a large body of prior research on learning extraction rules from HTML documents.
In data mining, this problem is known as \emph{wrapper induction}~\cite{kushmerick1997wrapper}, 
and there is a wide spectrum of proposed solutions~\cite{muslea1999hierarchical, hsu1998generating, chang2001iepad, crescenzi2001roadrunner, anton2005xpath, raza2020web, flashextract, gulhane2011web}.  
For instance, \textsc{Vertex}~\cite{gulhane2011web} uses an \emph{apriori} style algorithm~\cite{agarwal1994fast} to learn XPath-based rules from human annotated sample pages, and ~\cite{anton2005xpath} also  learns XPath-compatible wrappers.
While these techniques can extract HTML elements, they cannot perform finer-grained string processing \emph{inside} HTML elements.
In contrast, \textsc{FlashExtract}~\cite{flashextract} can perform some text manipulation inside HTML elements\footnote{{We were not able to experimentally compare against {\sc FlashExtract} because their released implementation in Prose does not support text manipulation in HTML elements.}}; however, unlike our approach, it does not use neural constructs, making it difficult to apply this technique to structurally heterogeneous websites.
Recent work~\cite{iyer2019synthesis} targets data extraction from heterogeneous sources by combining ideas from program synthesis and machine learning. However, this approach requires significant number of \emph{manually} labeled samples since it relies on first training an ML model. In contrast, our technique uses pre-trained models and a small amount of training data.

Recent work by \citet{raza2020web} proposes {\sc HYB}, a synthesis-driven web data extraction technique that is now deployed in the Microsoft Power BI product.
This technique is also based on program synthesis and uses a combination of top-down and bottom-up search. As shown in our evaluation, \toolname performs significantly better than their approach; we believe this is due to the fact that our method is based on a neurosymbolic DSL. 

There is also a line of work~\cite{chasins2018rousillon, barman2016ringer, chasins2015browser, chasins2017skip} for learning web automation macros using a programming-by-demonstration approach. These techniques perform scraper synthesis by recording user interactions with a few webpages and then generalize these interactions into a programmatic webpage scraper. In contrast to our approach, these techniques target structurally similar pages (e.g., different Amazon products) and use a different type of input, namely demonstrations.




\paragraph{Information extraction from text} Much of the IE literature (e.g., relation extraction  \cite{mintz09distant}) is confined to a given database schema. Among IE frameworks,   ``slot-filling'' approaches   \cite{freitag00,patwardhan2007} typically require at least medium-sized training sets to work on  specific schemas, 
and most few-shot approaches \cite{HanEtAl2018,SoaresEtAl2019} use a pre-trained model and perform further training on text that expresses the desired relations in a similar fashion to the target domain.
In contrast, open information extraction techniques\cite{etzioni08openie} aim to retrieve data in an ontology-free way that can theoretically be used for downstream tasks like question answering \cite{choi15scalable}. However, this information is extracted primarily from textual relations rather than structured formatting; even graph-based approaches use graphs over textual relations only \cite{qian19graphie}. Therefore, these approaches do not work well in settings (such as ours) that 
involve a combination of tree structure and free-form text. 

\paragraph{Information extraction from semi-structured data} Recent work has begun to tackle the problem of semi-structured data, particularly interactions between tables and natural language. Prior work looks at extracting lists from the web \cite{liang2014}, answering questions from tables \cite{pasupat15wikitableq}, verifying facts from tables \cite{chen19tabfact}, or generally extracting information from tables \cite{wu18fonduer}. However, much of this work assumes access to large training sets or relies heavily on the structure of tables.

Two recent efforts have tackled the problem of IE from semi-structured data in a setting similar to ours \cite{lockard20zeroshotceres,lin20freedom}. However, to use these techniques in our setting, we would have to first run their tools
to extract a knowledge base, and then interpret our questions into some kind of semantic representation that we can execute against the extracted knowledge base. In contrast to such an approach, our work instead \emph{dynamically} learns the relation to extract from the question, specified keywords, and examples. 

\paragraph{Question answering} Beyond the table-based question answering approaches listed above, there is little work on question answering over text that can be directly applied to our setting. BERT-based \cite{bert} models applied to datasets such as SQuAD \cite{squad} only work well on input that is completely unstructured text. While there are some recent efforts on question answering with more programmatic structures \cite{gupta20neural} for tasks like DROP \cite{dua19drop}, these systems are highly specialized to applications like answering numerical questions.

\paragraph{Quantitative program synthesis}
There has also been recent work on optimal program synthesis with respect to a quantitative objective. 
For example,~\citet{metasketch} introduce a general framework for optimal program synthesis where the search space is represented by a set of sketches. Their technique uses the objective function together with a gradient function to direct the search. In contrast to~\citet{metasketch}, our work specifically targets the web question answering domain, uses a neurosymbolic DSL, and employs an objective function that is based on program semantics. While ~\citet{metasketch}, in principle, also support semantic objective functions, they require the objective function to be reducible to a decidable theory, which  does not hold in our case due to the use of neural primitives.
\textsc{QuaSi}~\cite{hu2018syntax} also considers the problem of synthesis with quantitative objectives, but it requires the objectives to be syntactic. 
Other existing synthesis techniques~\cite{flashfill, wang2017program, wang2017synthesis},  are mostly \emph{qualitative}, although they implicitly use a ranking function as an inductive bias to help with generalization. However, such ranking functions are quite restricted and  mostly syntactic (e.g., based on program size).

\paragraph{Neurosymbolic DSLs.}

There has been recent interest in neurosymbolic DSLs that include both logical and neural components. For instance, neural module networks~\cite{andreas2016neural,andreas2016learning} dynamically compose DNNs for tasks like predicting object attributes in images~\cite{mao2019neuro} or identifying numbers  in text~\cite{gupta20neural}. However, these techniques use purely neural components (even for operations like  filtering and counting), which significantly increases sample complexity. Recent approaches have trained combinations of neural and logical components  by backpropagating through such programs~\cite{gaunt2017differentiable,valkov2018houdini,shah2020learning}. There has also been work on synthesizing neurosymbolic programs to represent structure in images~\cite{ellis2018learning,young2019learning} and reinforcement learning policies~\cite{inala2020learning,anderson2020neurosymbolic}. Overall, existing approaches largely focus on simultaneously learning the program structure and the DNN parameters. Hence, they are limited to very simple programs and neural components, as they need to optimize  neural network parameters using backpropagation. In contrast, our work is designed to incorporate state-of-the-art DNNs such as BERT, which take significant time to train. In addition, we search over tens of thousands of programs by relying on pretrained DNN models and by developing novel deduction techniques for optimal synthesis.

\paragraph{Multi-modal program synthesis.} There has been growing interest in program synthesis from multiple modalities of specifications. For instance, several works have used a combination  of natural language  and input-output examples to synthesize regular expressions, data wrangling and string manipulation programs,  SQL queries, and temporal logic formulas \cite{regel, mars, raza2015, duoquest, gavraninteractive}. Our technique can also be viewed as an instance of multi-modal synthesis that is based on a neurosymbolic programming language.


\section{Conclusion}\label{sec:concl}

We have presented \toolname, a new synthesis-powered system for extracting information  from webpages. 
We have evaluated \toolname on 25 different tasks spanning four different domains and 160 different webpages and show that \toolname significantly outperforms competing approaches in terms of $F_1$ score, precision, and recall.

\begin{acks}                            
  We thank our shepherd Uri Alon as well as our anonymous reviewers and members of the UToPiA group for their helpful feedback. This material is based upon work supported by the
  \grantsponsor{GS100000001}{National Science Foundation}{http://dx.doi.org/10.13039/100000001} under Grant
  No. CCF-\grantnum{GS100000001}{1811865}, Grant No. CCF-\grantnum{GS100000001}{1762299} and Grant No. CCF-\grantnum{GS100000001}{1918889}.
\end{acks}

\balance
\bibliography{main}

\newpage
\appendix
\section{Proofs}

\begin{definition}(Extension of a sub-program) A sub-program $p'$ is an extension of a sub-program $p$  in the \dsl DSL if $p' \in {\tt ApplyProduction}(p)$. 

\end{definition}

\begin{lemma}\label{lem:ub}(Correctness of the {\sc UB} (Equation~\ref{eq:ub})) Given a set of examples $\examples$, for any \dsl sub-program $p$, ${\sc UB}(p, \examples) \geq F_1(p, \examples)$. 
\end{lemma}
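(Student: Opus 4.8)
```latex
\paragraph{Proof plan for Lemma~\ref{lem:ub}.}
The plan is to establish the bound \(\textsf{UB}(p,\examples)\geq F_1(p,\examples)\) by showing that the upper bound formula in Equation~\ref{eq:ub} corresponds to the $F_1$ score one would obtain under the \emph{best possible} precision (namely $1$) together with the actual recall achievable from $p$'s section locator. Concretely, recall that $F_1 = 2\,\mathsf{P}\,\mathsf{R}/(\mathsf{P}+\mathsf{R})$, where $\mathsf{P}$ and $\mathsf{R}$ denote precision and recall. The key structural fact I would lean on is the \emph{monotonicity of the DSL with respect to recall}: if $p'$ is an extension of $p$ (i.e.\ $p' \in \texttt{ApplyProduction}(p)$, using the Extension definition just stated), then $\mathsf{Recall}(p',\examples)\leq\mathsf{Recall}(p,\examples)$, because every extractor production (\texttt{Substring}, \texttt{Filter}, \texttt{Split}) and every section-locator production (\texttt{GetChildren}, \texttt{GetDescendants}) can only remove tokens that the parent could extract, never add new ones beyond those reachable from the nodes $N$ returned by $p$. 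Hence the recall of any branch program built on top of $p$ is bounded above by $\mathsf{Recall}(\nodeExt,\examples)$, the recall of its section locator.

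First I would treat $F_1$ as a function of precision and recall and observe that, holding recall $\mathsf{R}$ fixed, $F_1$ is monotonically increasing in precision $\mathsf{P}$ on $[0,1]$; this is a routine check of the partial derivative of $2\mathsf{P}\mathsf{R}/(\mathsf{P}+\mathsf{R})$ with respect to $\mathsf{P}$. Consequently, setting $\mathsf{P}=1$ yields the largest $F_1$ attainable for a given recall, namely
\[
\frac{2\cdot 1\cdot \mathsf{R}}{1+\mathsf{R}}=\frac{2\mathsf{R}}{1+\mathsf{R}},
\]
which is exactly the right-hand side of Equation~\ref{eq:ub} once $\mathsf{R}=\mathsf{Recall}(\nodeExt,\examples)$ is substituted. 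Next I would note that $F_1$ is also monotonically increasing in recall (the same derivative computation, now with respect to $\mathsf{R}$), so replacing the true recall of $p$ by the (at least as large) recall of its section locator can only increase the value. Combining these two monotonicity observations gives
\[
F_1(p,\examples)=\frac{2\,\mathsf{P}(p)\,\mathsf{R}(p)}{\mathsf{P}(p)+\mathsf{R}(p)}\leq\frac{2\,\mathsf{R}(p)}{1+\mathsf{R}(p)}\leq\frac{2\,\mathsf{Recall}(\nodeExt,\examples)}{1+\mathsf{Recall}(\nodeExt,\examples)}=\textsf{UB}(p,\examples),
\]
as desired.

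I expect the main obstacle to be rigorously justifying the recall-monotonicity step, i.e.\ that $\mathsf{Recall}(p,\examples)\leq\mathsf{Recall}(\nodeExt,\examples)$ for the section locator $\nodeExt$ underlying $p$. This requires a careful case analysis over each extractor production, arguing that applying \texttt{Substring}, \texttt{Filter}, or \texttt{Split} to a (multi)set of extracted tokens never introduces a token that was not already present among the tokens extracted from the locator's node set $N$, so the intersection with the gold tokens $\examples_{\mathrm{out}}$ cannot grow. I would handle this by an induction on the structure of the extractor, with the base case \texttt{ExtractContent}(x) extracting exactly the tokens of the nodes in $N$ and each inductive case only pruning or partitioning the token (multi)set of its subexpression. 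The precision side is comparatively trivial, since bounding precision by $1$ requires no structural reasoning at all; the entire content of the lemma rests on the recall bound and the elementary monotonicity of $F_1$ in each argument.
```
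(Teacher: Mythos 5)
Your proof is correct and rests on the same core idea as the paper's: the $F_1$ score is monotonically increasing in precision, so replacing the actual precision by its maximum value $1$ yields $2\mathsf{R}/(1+\mathsf{R})$, which is exactly $\textsf{UB}$. The paper's own proof of this lemma is just that one step and nothing more. Where you diverge is in additionally proving the recall inequality $\mathsf{Recall}(p,\examples)\leq\mathsf{Recall}(\nodeExt,\examples)$ via structural induction on the extractor productions; in the paper that step is not part of Lemma~\ref{lem:ub} at all but is instead the content of the separate Theorem~\ref{thm:pruning} (correctness of pruning), which bounds the $F_1$ of any \emph{extension} $p'$ of $p$ by $\textsf{UB}(p,\examples)$. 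Folding it in is not wrong --- it makes your version of the lemma self-contained even when $\textsf{UB}$ is evaluated on the section locator rather than on $p$ itself, and your proposed induction over \texttt{ExtractContent}, \texttt{Substring}, \texttt{Filter}, and \texttt{Split} is precisely the argument the paper sketches for the theorem --- but you should be aware that you are proving a strictly stronger statement than the lemma as stated, and that the paper's decomposition keeps the lemma as a one-line fact about the $F_1$ formula while isolating the DSL-specific recall-monotonicity reasoning in the pruning theorem.
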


\begin{proof}
Recall that the $F_1$ score is computed as:
\[
F_1(p, \examples) = \frac{2 \cdot {\tt Precision}(v, \examples) \cdot {\tt Recall}(v, \examples)}{{\tt Precision}(v, \examples) + {\tt Recall}(v, \examples)}
\]

where $max({\tt Precision}(v, \examples)) = 1.0$. According to the  {\sc UB} computation, we always set ${\tt Precision}(v, \examples)$  to 1.0, ${\sc UB}(p, \examples) \geq F_1(p, \examples)$.
\end{proof}

\begin{theorem}\label{thm:pruning} (Correctness of the pruning procedure (i.e. line 9 in \figref{alg:extractor}, line 8 in \figref{alg:guard} and line 6 in \figref{alg:branch})) Given a DSL subprogram $p$ and a set of examples $\examples$, let $p'$ be any valid extension of $p$ in the \dsl DSL, then $F_1(p', \examples) \leq {\sc UB}(p, \examples)$.

\end{theorem}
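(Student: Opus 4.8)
The plan is to factor the desired inequality $F_1(p', \examples) \le {\sc UB}(p, \examples)$ through the intermediate quantity ${\sc UB}(p', \examples)$, by establishing two facts: first, that ${\sc UB}$ always overestimates the $F_1$ score of the very program it is computed on, and second, that ${\sc UB}$ is monotonically non-increasing as a sub-program is extended. The first fact is exactly Lemma~\ref{lem:ub}, which I may assume and which gives $F_1(p', \examples) \le {\sc UB}(p', \examples)$ directly. The remaining work is the second fact, ${\sc UB}(p', \examples) \le {\sc UB}(p, \examples)$; chaining the two inequalities then yields the theorem immediately.

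To prove ${\sc UB}(p', \examples) \le {\sc UB}(p, \examples)$, I would first appeal to the monotonicity property of the \dsl DSL with respect to recall. Since $p'$ is an extension of $p$ (i.e. $p' \in {\tt ApplyProduction}(p)$, so $p$ occurs as a sub-expression of $p'$), the paper's monotonicity claim gives $\mathsf{Recall}(p, \examples) \ge \mathsf{Recall}(p', \examples)$. I would then observe that the map $g(R) = \frac{2R}{1+R}$ defining ${\sc UB}$ in Eq.~\ref{eq:ub} is strictly increasing on $[0,1]$, since $g'(R) = \frac{2}{(1+R)^2} > 0$. Composing, $\mathsf{Recall}(p,\examples) \ge \mathsf{Recall}(p',\examples)$ implies $g(\mathsf{Recall}(p,\examples)) \ge g(\mathsf{Recall}(p',\examples))$, that is, ${\sc UB}(p,\examples) \ge {\sc UB}(p',\examples)$. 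Combined with Lemma~\ref{lem:ub}, this gives $F_1(p',\examples) \le {\sc UB}(p',\examples) \le {\sc UB}(p,\examples)$, as required.

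The main obstacle is establishing the recall-monotonicity claim rigorously for every production reachable by {\tt ApplyProduction}, since this is the one step that depends on the DSL semantics rather than on arithmetic. I would discharge it by case analysis, showing that on every input webpage in $\examples$ the set of extracted tokens (respectively located nodes) of $p'$ is a subset of that of $p$: {\tt Filter} and the node filters only drop elements, {\tt Substring} returns a piece of an existing string, {\tt Split} partitions an existing string into fragments that are themselves substrings, and {\tt GetChildren}/{\tt GetDescendants} descend to a subset of the reachable subtree. Because recall is the fraction of ground-truth tokens recovered and the recovered set only shrinks under extension, recall cannot increase. A subtle point worth stating explicitly is that ${\sc UB}$ is applied to section locators inside guards (line 6 of Figure~\ref{alg:branch} and line 8 of Figure~\ref{alg:guard}) as well as to extractors (line 9 of Figure~\ref{alg:extractor}); I would note that the same argument applies uniformly, since in all three settings the content whose tokens are counted can only lose tokens as the program grows.
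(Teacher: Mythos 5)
Your proposal is correct and follows essentially the same route as the paper's proof: both factor the inequality through ${\sc UB}(p',\examples)$ via the chain $F_1(p',\examples) \leq {\sc UB}(p',\examples) \leq {\sc UB}(p,\examples)$, using Lemma~\ref{lem:ub} for the first step and recall-monotonicity under {\tt ApplyProduction} (with precision fixed at $1$) for the second. Your version is in fact somewhat more careful than the paper's, which asserts the subset-of-tokens property and the monotonicity of $R \mapsto \frac{2R}{1+R}$ without the explicit case analysis or derivative check you supply.
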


\begin{proof}[Proof Sketch] Recall that given a subprogram $p$ and $p'$ that is instantiated by a valid production in the DSL that takes $p$ as its argument, the output of $p'$ is a subset of tokens in the output of $p$ on the same given example $e$. Therefore, let ${\tt Recall}(p, e)$ represents the recall of the output of program $p$ with respect to the ground truth of example $e$,  we have  ${\tt Recall}(p, e) \geq {\tt Recall}(p', e)$. According to the upper bound computation in Equation~\ref{eq:ub}, we fix the precision  of any output to be the best scenario, $1.0$. Therefore, ${\sc UB}(p, e) \geq {\sc UB}(p', e)$. Using Lemma~\ref{lem:ub}, we obtain that $F_1(p', e) \leq {\sc UB}(p, e)$. Without the loss of generality, given a set of examples $\examples$, we have  ${\sc UB}(p, \examples) \geq {\sc F_1}(p', \examples)$.
\end{proof}

\begin{lemma}\label{lem:getnextguard}(Correctness of \ {\sc GetNextGuard}) Given inputs $\examples^+, \examples^-$, Q, \keywordss, $opt$, {\sc GetNextGuard}$(\examples^+, \examples^-, Q, \keywordss, opt)$ will eventually returns all guard programs $\psi$ with a depth limit $d_g$ such that it differentiates between $\examples^+$ and $\examples^-$, and ${\sc  UB}(\psi, \examples^+) \geq opt$.

\end{lemma}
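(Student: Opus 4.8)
The plan is to prove this completeness claim by showing that the pruning performed by {\sc GetNextGuard} (line 8 of Figure~\ref{alg:guard}) never discards the section locator underlying a valid, non-suboptimal guard. I would fix an arbitrary guard $\guard$ of depth at most $d_g$ that differentiates $\examples^+$ from $\examples^-$ and satisfies ${\sc UB}(\guard, \examples^+) \geq opt$, write $\nodeExt_\guard$ for the section locator it uses, and argue that $\nodeExt_\guard$ is eventually dequeued from the worklist $\worklist$. Once $\nodeExt_\guard$ is dequeued, the inner loop invokes $\textsf{GenGuards}(\nodeExt_\guard)$, which (by completeness of $\textsf{GenGuards}$ up to the guard depth bound) enumerates every guard built over $\nodeExt_\guard$, in particular $\guard$ itself; since $\guard$ classifies the examples, the predicate check guarding the \textbf{yield} holds and $\guard$ is returned. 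Thus the whole lemma reduces to showing that $\nodeExt_\guard$ reaches the worklist.

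To establish this, I would consider the derivation chain $\nodeExt_0 = \texttt{GetRoot}(\webpage), \nodeExt_1, \ldots, \nodeExt_m = \nodeExt_\guard$, where each $\nodeExt_{i+1} \in \textsf{ApplyProduction}(\nodeExt_i)$, and prove by induction on $i$ that every $\nodeExt_i$ is added to $\worklist$. The base case holds because $\nodeExt_0$ is inserted at initialization (line 1). For the inductive step, once $\nodeExt_i$ is dequeued, $\nodeExt_{i+1}$ is generated in the $\textsf{ApplyProduction}$ loop (lines 7--8) and is added precisely when ${\sc UB}(\nodeExt_{i+1}, \examples^+) \geq opt$. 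The crux of the argument is therefore to establish this inequality for every link in the chain.

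This is where I would invoke the monotonicity of {\sc UB} under program extension established in the proof of Theorem~\ref{thm:pruning}: if $q$ is any extension of $p$ then ${\sc UB}(p, \examples^+) \geq {\sc UB}(q, \examples^+)$, because extending a section locator can only shrink its extracted content and hence cannot increase recall, while precision is fixed to $1$ in the upper-bound formula (Eq.~\ref{eq:ub}). Since $\nodeExt_\guard$ is obtained from each $\nodeExt_i$ by repeatedly applying productions, $\nodeExt_\guard$ is an extension of $\nodeExt_i$, so ${\sc UB}(\nodeExt_i, \examples^+) \geq {\sc UB}(\nodeExt_\guard, \examples^+)$. Finally, because the upper bound of a guard is computed from its section locator (line 6 of Figure~\ref{alg:branch}), ${\sc UB}(\guard, \examples^+) = {\sc UB}(\nodeExt_\guard, \examples^+)$, and by assumption this is at least $opt$. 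Chaining these facts yields ${\sc UB}(\nodeExt_i, \examples^+) \geq opt$ for every $i$, so no node on the chain is pruned, each $\nodeExt_i$ enters $\worklist$, and $\nodeExt_\guard$ is eventually dequeued. Termination of the while loop follows from finiteness of the space of section locators of depth at most $d_g$, which guarantees that every enqueued element is eventually processed.

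I expect the main obstacle to be the careful bookkeeping that connects the pruning test---which is phrased over section locators---to the upper bound of the full guard, and in particular verifying that $\nodeExt_\guard$ genuinely counts as an extension of each intermediate $\nodeExt_i$ so that the monotonicity of Theorem~\ref{thm:pruning} applies along the entire derivation chain. A secondary point to pin down precisely is the completeness of $\textsf{GenGuards}$ (that it enumerates all guards over a given section locator within the depth bound) together with the identity ${\sc UB}(\guard, \examples^+) = {\sc UB}(\nodeExt_\guard, \examples^+)$; both are needed to complete the reduction cleanly, and both are immediate from the definitions but should be stated explicitly.
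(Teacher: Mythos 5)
Your proposal is correct and follows essentially the same route as the paper's own proof: both reduce the claim to the soundness of the pruning at line~8, arguing via the monotonicity of \textsc{UB} under \textsf{ApplyProduction} (Theorem~\ref{thm:pruning} together with Lemma~\ref{lem:ub}) that no section locator on the derivation path to a qualifying guard is ever discarded, so the enumeration behaves as an exhaustive bottom-up search up to depth $d_g$. Your version merely makes explicit the induction along the chain $\texttt{GetRoot}(\webpage) \leadsto \nodeExt_\guard$ and the identity between a guard's upper bound and that of its section locator, both of which the paper leaves implicit.
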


\begin{proof}

 Observe that the worklist $\worklist$ keeps track of all section locators $v$ whose ${\sc UB}(v, \examples^+) \geq opt$. Given each $v\in  \worklist$, we return guard $\psi$ only if it successfully classifies $\examples^+$ and $\examples^-$. Also since the guard generated using the section locators has no effect on the $F_1$ of the future synthesized program, every $\psi$ returns successfully classifies $\examples^+$ and  $\examples^-$ and also achieves ${\sc UB}(v, \examples^+) \geq opt$.

We now show that {\sc GetNextGuard} returns all possible guards up to depth $d_g$ in the DSL that successfully classifies $\examples^+$ and  $\examples^-$ and also achieves ${\sc UB}(v, \examples^+) \geq opt$. Not that while {\sc GetNextGuard} enumerates guards lazily, it can still be viewed as a synthesis algorithm that returns all valid guards whose upper bound is greater $opt$. In this way, we view this algorithm as a standard enumerated procedure except we do pruning on line 8. According Theorem~\ref{thm:pruning}, the pruning is correct in the sense that we will not prune any subprogram whose extension can achieve better $F_1$ score than $opt$. And also according to Lemma~\ref{lem:ub} that {\sc UB} is a upper bound of the $F_1$ score, any program whose upper bound is smaller than $opt$, its $F_1$ score is guaranteed to be smaller than $opt$. Therefore, we will enumerate all possible guard in the DSL that satisfies the specification with a depth limit. 

\end{proof}

\begin{lemma}\label{lem:synthesizeextractors}(Correctness of \ {\sc SynthesizeExtractors}) Given inputs $\examples, Q, \keywordss, opt$,  {\sc SynthesizeExtractors}($\examples, Q, \keywordss, opt$) returns all extractors in the \dsl DSL with depth limit $d_e$ that achieves the highest $F_1$ score that is greater or equal to $opt$ on $\examples$.
\end{lemma}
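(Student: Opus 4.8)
The plan is to prove the lemma in the two standard directions—soundness and completeness—mirroring the structure already used for {\sc GetNextGuard} in Lemma~\ref{lem:getnextguard}. Let $s^\star$ denote the maximum $F_1$ score achievable by any extractor of depth at most $d_e$ on $\examples$, and assume $s^\star \ge opt$ (if $opt > s^\star$ then $E_o$ simply remains empty and returning $\varnothing$ is vacuously correct). The first step I would take is to establish the invariant that the running optimum $\fonescore_o$ maintained by {\sc SynthesizeExtractors} (Figure~\ref{alg:extractor}) satisfies $opt \le \fonescore_o \le s^\star$ throughout execution: it is initialized to $opt$, it is only ever reassigned to $F_1(\stringExt,\examples)$ for some enumerated extractor $\stringExt$, and every such value is bounded above by $s^\star$ by definition. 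In particular $\fonescore_o$ is monotonically non-decreasing.

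Soundness is then immediate from the update logic: an extractor $\stringExt$ is inserted into $E_o$ only when $F_1(\stringExt,\examples)\ge \fonescore_o$, and $E_o$ is reset to the singleton $\{\stringExt\}$ whenever a strictly larger score is found. Since $\fonescore_o$ never decreases and (by completeness, below) terminates at $s^\star$, every surviving member of $E_o$ has $F_1 = s^\star \ge opt$, so no sub-optimal extractor is returned.

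The crux is completeness: every extractor $\stringExt^\star$ of depth $\le d_e$ with $F_1(\stringExt^\star,\examples)=s^\star$ must appear in $E_o$. I would show this in two parts—(i) $\stringExt^\star$ is never pruned from the search, and (ii) once dequeued it is retained. For (i), observe that $\stringExt^\star$ is reachable from the seed ${\tt ExtractContent}(x)$ by repeatedly applying \textsf{ApplyProduction}, since these productions exhaust the extractor grammar. Let $\stringExt_a$ be any ancestor of $\stringExt^\star$ along this derivation chain. The key observation is that recall is monotonically non-increasing along an \emph{entire} chain of extensions, not merely a single step, because each production can only shrink the multiset of output tokens; composing this across the chain yields $\mathsf{Recall}(\stringExt^\star,\examples)\le \mathsf{Recall}(\stringExt_a,\examples)$, and hence $F_1(\stringExt^\star,\examples)\le {\sc UB}(\stringExt_a,\examples)$ via Equation~\ref{eq:ub} (precision fixed to $1$) together with Lemma~\ref{lem:ub}. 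Therefore ${\sc UB}(\stringExt_a,\examples)\ge F_1(\stringExt^\star,\examples)=s^\star\ge \fonescore_o$ whenever $\stringExt_a$ is examined, so the worklist-insertion test in Figure~\ref{alg:extractor} succeeds and $\stringExt_a$ is enqueued; inducting along the chain shows $\stringExt^\star$ itself is eventually dequeued. For (ii), when $\stringExt^\star$ is processed we have $F_1(\stringExt^\star,\examples)=s^\star\ge \fonescore_o$, so it is either added to $E_o$ (when equal) or becomes the new sole optimum (when strictly greater), and because $\fonescore_o$ can never exceed $s^\star$, no subsequent update can evict it. Termination follows from finiteness of the depth-$d_e$ search space, which guarantees the worklist eventually empties.

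The main obstacle I anticipate is step (i)—specifically, lifting the single-step extension bound of Theorem~\ref{thm:pruning} to arbitrary descendants. That theorem is phrased for $p'\in{\tt ApplyProduction}(p)$, whereas the pruning argument needs $F_1(\stringExt^\star,\examples)\le {\sc UB}(\stringExt_a,\examples)$ for a \emph{multi-step} descendant $\stringExt^\star$ of $\stringExt_a$. I would close this gap by first proving the transitive recall-monotonicity claim as an auxiliary lemma, stated directly in terms of token-multiset containment along a derivation chain, and only then feeding it into the {\sc UB} formula; the remaining bookkeeping (the invariant on $\fonescore_o$, retention in $E_o$, and termination) is routine.
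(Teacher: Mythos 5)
Your proposal is correct and follows essentially the same route as the paper's own proof: an inductive soundness invariant on the running optimum $\fonescore_o$ combined with a completeness argument that no optimal extractor is ever pruned from the worklist, both resting on the recall-monotonicity of extractor productions and the soundness of the {\sc UB} bound. The one place you go beyond the paper---explicitly lifting the single-step bound of Theorem~\ref{thm:pruning} to multi-step descendants via a transitive token-containment lemma---is a step the paper's proof uses only implicitly, so making it explicit tightens rather than changes the argument.
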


\begin{proof}
We first prove that at the end of each iteration, for any extractor $e \in E_o$, $e$ achieves the highest $F_1$ score that is greater than $opt$ among all extractors enumerated so far. We prove this by inducting on the iteration  of the loop. 
\begin{itemize}
    \item Base case (iteration=1): the current optimal $F_1$ score is $opt$ and if the pulled extractor (noted as $e'$)'s $F_1$ score is greater than the current optimal  score, we update $E_o$ and $s_o$ so that it only include $e'$ and $e'$'s $F_1$ score as the new optimal score respectively. If $e'$'s $F_1$ is the same as $s_o$, the algorithm append $e'$ to $E_o$. Otherwise $E_o$ will be empty set. The statement is true by the end of the first iteration.
    \item Inductive case: suppose in iteration=n, any extractor $e' \in  E_o$ achieves the optimal $F_1$ score. Given a new extractor $e'$ pulled from the worklist with performance $F_{1_{e'}}$, if $F_{1_{e'}} > s_o$, $E_o$ is reinitialize to include only $e'$ and update the $s_o$. If $F_{1_{e'}} = s_o$, $e'$ is appended to $E_o$. Using the inductive hypothesis, $E_o$ still only includes those extractors that achieve the optimal $F_1$. In the case where $F_{1_{e'}} < s_o$, using the inductive hypothesis, the statement still holds. 
\end{itemize}

We now prove that all extractors $e$ in the \dsl DSL up to depth $d_e$ such  that $ F_1(e, \examples) = s_o$ will be included in the set $E_o$.

We first show that in the end of each iteration, given a extractor $e$ pulled from $\worklist$, any extensions of $e$  will be appended to $\worklist$ if its upper bound is greater than the current value of $s_o$ ($s_{o_{curr}}$). Note that this algorithm is a standard enumerative synthesis algorithm so without pruning it will go over all possible program in the DSL up to depth $d_e$. Since we do pruning on line 9, so this statement is correct as long as the pruning procedure is correct. According to Theorem~\ref{thm:pruning}., the pruning procedure is sound and therefore any extractor that is pruned can never be extended to achieve an $F_1$ score higher than $s_{o_{curr}}$. Therefore $\worklist$ contains all $e$ in the DSL except those that is guaranteed cannot be or become sub-program of any program that achieves $F_1$ larger than than $s_{o_{curr}}$. Furthermore, since $s_o$ can only be increased overtime, let $E_{s_o}$ be the set of extractors in the DSL up to depth $d_e$ whose upper bound is greater than $s_o$ and $E_{\worklist}$ be the set of all extractors in the DSL up to depth $d_3$ that $\worklist$ has visited by the time while loop terminates, we have $E_{s_o} \subseteq E_{\worklist}$. According to Lemma~\ref{lem:ub}, $E_o \subseteq E_{s_o}$. Note that the terminating condition of the while loop is $\worklist \equiv \varnothing$ and $\worklist$ will become empty since we limit the enumeration depth to $d_e$. Therefore, $E_o$ includes all extractors $e$ in the \dsl DSL up to depth $d_e$ such  that $ F_1(e, \examples) = s_o$. 
\end{proof}

\begin{lemma}\label{lem:synthesizebranch}(Correctness of {\sc SynthesizeBranch})
Given the input $\examples^+$, $\examples^-$, Q, \keywordss, {\sc SynthesizeBranch}($\examples^+$, $\examples^-$, Q, \keywordss) will returns all \dsl programs $P$ that achieves highest $F_1$ score for $\examples^+$ and differentiate $\examples^+$ and $\examples^-$.
\end{lemma}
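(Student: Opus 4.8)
The plan is to prove both directions of the claim---\emph{soundness} (every pair $(\guard, \stringExt) \in \res$ returned is a classifying branch program of optimal $F_1$) and \emph{completeness} (every such optimal classifying branch program appears in $\res$)---by routing both through a single invariant: the running threshold $opt$ converges to the true optimum. Let $F^\ast$ denote the maximum $F_1$ on $\examples^+$ attainable by any branch program $(\guard, \stringExt)$ (with $\guard$, $\stringExt$ within the depth bounds $d_g, d_e$) whose guard accepts $\examples^+$ and rejects $\examples^-$. The reduction I would use throughout is a correctness property of \textsf{PropagateExamples}: for any guard $\guard$ with locator $\nodeExt$ and any extractor $\stringExt$, evaluating $\stringExt$ on $\examples' = \textsf{PropagateExamples}(\examples^+, \guard, Q, K)$ reproduces exactly the output of the branch program $(\guard, \stringExt)$ on $\examples^+$, so that $F_1(\stringExt, \examples') = F_1((\guard, \stringExt), \examples^+)$. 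Establishing this identity first is what licenses searching guards and extractors separately while still reasoning about the composite branch score.

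First I would show $opt = F^\ast$ at termination. The bound $opt \le F^\ast$ is immediate, since every value $opt$ takes is realized by an actual discovered branch program. For $opt \ge F^\ast$, fix an optimal classifying pair $(\guard^\ast, \stringExt^\ast)$ with locator $\nodeExt^\ast$. By Lemma~\ref{lem:ub} and Theorem~\ref{thm:pruning}, $\textsf{UB}(\nodeExt^\ast, \examples^+) \ge F_1((\guard^\ast, \stringExt^\ast), \examples^+) = F^\ast$; because $opt \le F^\ast$ holds invariantly, $\guard^\ast$ passes the pruning test on line~6 and is never suppressed by the lazy enumeration, so by Lemma~\ref{lem:getnextguard} it is eventually yielded by \textsc{GetNextGuard}. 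When it is, Lemma~\ref{lem:synthesizeextractors} guarantees \textsc{SynthesizeExtractors} returns extractors achieving the highest $F_1 \ge opt$ on $\examples'$; since $\stringExt^\ast$ attains $F^\ast$ there (by the propagation identity) and since no branch program exceeds the global maximum $F^\ast$, that highest score equals $F^\ast$, forcing $opt \ge F^\ast$.

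Completeness then follows from the same trajectory: when $\guard^\ast$ is processed, \textsc{SynthesizeExtractors} returns the \emph{full} set of $F^\ast$-optimal extractors for $\examples'$, which contains $\stringExt^\ast$. Whether $\guard^\ast$ is handled while $opt < F^\ast$ (triggering the reset $\res \gets \{(\guard^\ast, E)\}$) or while $opt = F^\ast$ (triggering $\res[\guard^\ast] \gets E$), the pair $(\guard^\ast, \stringExt^\ast)$ enters $\res$, and since $F^\ast$ is maximal no further reset can occur, so it persists. Soundness is the dual bookkeeping argument: each entry is installed with its guard coming from \textsc{GetNextGuard} (hence classifying $\examples^+,\examples^-$ by Lemma~\ref{lem:getnextguard}) and with extractor score equal to the then-current $opt$, while each strict improvement of $opt$ clears $\res$; therefore every surviving entry carries score $F^\ast = opt$. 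Termination holds because the depth bounds $d_g, d_e$ render both search spaces finite, so \textsc{GetNextGuard} eventually returns $\bot$ and the while loop exits.

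The step I expect to be the main obstacle is making rigorous the interaction between the \emph{lazy, stateful} guard generator and the \emph{monotonically increasing} threshold $opt$: I must show that raising the pruning threshold between successive \textsc{GetNextGuard} calls can never discard a section locator lying on a path to an optimal guard. This rests on combining the invariant $opt \le F^\ast$ with the upper-bound soundness (Lemma~\ref{lem:ub}, Theorem~\ref{thm:pruning}), which together force every such locator to satisfy $\textsf{UB} \ge F^\ast \ge opt$ and thus to be retained in the worklist. Lemma~\ref{lem:getnextguard}, though phrased for a fixed threshold, can then be applied at the effective threshold $F^\ast$, closing the gap between the per-call guarantee and the dynamically varying $opt$ used by \textsc{SynthesizeBranch}.
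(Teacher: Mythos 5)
Your proof is correct and follows essentially the same route as the paper's: both combine the correctness of \textsc{GetNextGuard}, the \textsf{UB}/pruning soundness, and the completeness of \textsc{SynthesizeExtractors} with bookkeeping on $opt$ and $\res$ to show that the optimal guard is never pruned and that all optimal extractors for it are collected. Your version is somewhat more careful than the paper's in two places it leaves implicit --- the identity $F_1(\stringExt,\examples') = F_1((\guard,\stringExt),\examples^+)$ justifying the decomposition, and the invariant $opt \le F^\ast$ that reconciles the rising threshold with the lazy guard enumeration --- but these are refinements of the same argument, not a different one.
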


\begin{proof}
We first prove that any program returned achieves maximum $F_1$ score for $\examples^+$ and differentiates $\examples^+$ and $\examples^-$. We give a proof sketch since the proof is similar to the one that proves the soundness for the {\sc SynthesizeExtractor}. Observe that in each iteration we update $opt$ and $R$ if the synthesized program achieved higher $F_1$  than the current $opt$ value and assign the program to $R$ if its $F_1$ is $opt$, and these variables are not updated otherwise. Furthermore, according to Lemma~\ref{lem:synthesizeextractors}, {\sc SynthesizeExtractors} returns the extractors that achieves the highest $F_1$ score on example $\examples^+$ and according to Lemma~\ref{lem:getnextguard}, {\sc GetNextGuard} returns the guards that differentiates between $\examples^+$ and $\examples^-$.  Therefore, it keeps track of the score and the set of programs that achieves the highest $F_1$ score on $\examples^+$ so far.  

We now prove that any branch program in the \dsl DSL whose guard is up to depth $d_g$ and extractor is up to depth $d_e$ that successfully classifies $\examples^+$ and $\examples^-$ and achieves maximum $F_1$ on $\examples^+$ is included in the set of returned program.

We first show that each iteration generates all branch programs whose $F_1$ score is $opt_i$, where $opt_i$ is the optimal $F_1$ score at iteration $i$. We can prove this statement because the following hold: (1)  Lemma~\ref{lem:getnextguard} shows that {\sc GetNextGuard} will returns guard that successfully classifies $\examples^+$ and $\examples^-$ and also achieves an upper bound greater than $opt_i$. (2) Theorem~\ref{thm:pruning}  shows the pruning procedure on line 6 is correct and therefore it will not prune out any guard who may achieves an $F_1$ score greater or equal to $opt_i$. (3) Lemma~\ref{lem:synthesizeextractors} shows that {\sc SynthesizeExtractors} will return all extractor programs up to depth $d_e$ that achieves the highest $F_1$ score on $\examples'$(which is propagated from $\examples^+$) that is greater or equal to than $opt_i$. Since in each iteration we construct all the optimal extractor programs with respect to the valid guard program, we do produce all possible branch programs in the DSL that is optimal.

Since accroding to Lemma~\ref{lem:getnextguard}, {\sc GetNextGuard} will returns all valid guards $\psi$ up to depth $d_g$ whose {\sc UB}$(\psi.v, \examples^+) \geq opt'$ for $opt' \leq opt$ and also since we keep updating the value of $opt$ and $R$ with respect to the best synthesized branch program in each iteration, when {\sc SynthesizeBranch} terminates, it will return all branch program whose guards successfully classifies $\examples^+$ and $\examples^-$ and achieves highest $F_1$ score for $\examples^+$.
\end{proof}

\begin{theorem} (Guarantee of Optimality)
Let $\examples, Q, \keywordss$ be inputs to the {\sc Synthesize} procedure and let  $p$ be a \dsl  program. Then, the set of programs returned by {\sc Synthesize}($\examples, Q, \keywordss$) includes $p$ if and only if, for any other  \dsl program $p'$, the  $F_1(p)  \geq F_1(p')$. 
\end{theorem}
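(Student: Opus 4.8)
My plan is to prove the two directions of the biconditional simultaneously by showing that the returned set $\res$ coincides exactly with the set of \dsl programs attaining the \emph{global} maximum $F_1$ score on $\examples$. The backbone of the argument is a reduction to partitions: any program $\prog = \{\guard_1 \to \lambda x.\,\stringExt_1, \ldots, \guard_n \to \lambda x.\,\stringExt_n\}$ induces an ordered partition of $\examples$ into blocks $\examples_1, \ldots, \examples_n$, where $\examples_i$ is the set of webpages for which $\guard_i$ is the first guard to fire (webpages matched by no guard are folded into a trailing block with the empty extractor). Since \textsc{Synthesize} enumerates \emph{every} partition of $\examples$ (line 2 of Figure~\ref{alg:top-level}), the partition induced by any candidate program is considered; this is precisely what lets me relate an arbitrary $\prog$ to the enumeration.

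First I would fix an arbitrary partition $\partition = [\examples_1, \ldots, \examples_n]$ and invoke Lemma~\ref{lem:synthesizebranch}: the inner loop (lines 4--7) calls \textsc{SynthesizeBranch} on each block with the correct positive/negative split, and the lemma guarantees that the resulting $B_i$ contains \emph{all} branch programs $(\guard_i,\stringExt_i)$ whose guard fires exactly on $\examples_i$ (and is false on the still-unconsidered examples) and whose extractor achieves the maximal per-block $F_1$. I would then argue the composition step: assembling one optimal extractor from each $B_i$ yields a program whose global $F_1$ on $\examples$ equals the best global $F_1$ achievable by \emph{any} program inducing $\partition$. The enabling facts are that every webpage is handled by exactly one block, so the pooled token-level true positives, false positives, and false negatives decompose as sums over blocks, and that $F_1(bs,\examples)$ evaluated at line 8 measures precisely this global quantity.

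Finally I would take the maximum over partitions. Let $\prog^\star$ be any globally $F_1$-optimal program and $\partition^\star$ its induced partition; by the composition step the iteration for $\partition^\star$ produces a $bs$ with $F_1(bs,\examples) = F_1(\prog^\star)$, so the running $opt$ reaches the true global optimum $opt^\star$. Lines 8--12 then retain in $\res$ every $bs$ whose global score equals $opt^\star$ and discard all others; combined with the completeness half of Lemma~\ref{lem:synthesizebranch} (all optimal branch combinations are enumerated), this yields completeness (every globally optimal $\prog$ is represented in $\res$) and soundness (nothing of strictly smaller $F_1$ survives), which is exactly the claimed biconditional.

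I expect the composition step in the second paragraph to be the main obstacle. Because $F_1$ is a ratio of pooled counts rather than an additive score, it is not automatic that maximizing each block's \emph{local} $F_1$ maximizes the \emph{global} $F_1$ of the composed program for a fixed partition: a branch trading precision for recall can improve its local score yet degrade the pooled score when combined with a high-precision sibling branch. Closing this gap is the crux; I would attack it using the recall-monotonicity of the DSL (Lemma~\ref{lem:ub} and Theorem~\ref{thm:pruning}) together with the disjointness of the blocks' supports to show that the per-block and global optima coincide over the fixed partition, and, if a fully general argument proves elusive, by making explicit the structural assumption under which per-block optimization provably composes to global optimality.
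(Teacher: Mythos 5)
Your proposal follows essentially the same route as the paper's own proof: enumerate all partitions, invoke the branch-level completeness lemma (Lemma~\ref{lem:synthesizebranch}) on each block, compose one optimal branch per block, and take the maximum over partitions at lines 8--12 of Figure~\ref{alg:top-level}. The one place you diverge is that you explicitly flag the composition step as the crux, whereas the paper's proof simply asserts it: the paper states that ``since each branch covers a disjoint set of examples, the collection of all optimal branch programs for each set of examples in a partition forms a set of optimal top-level programs with respect to the given partition,'' with no further justification. So you have not found a different argument --- you have found the unproved step in the paper's argument.

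Your worry is substantively correct and worth being precise about. If $F_1(bs,\examples)$ at line 8 is a macro-average of per-webpage $F_1$ scores, the global objective decomposes additively over the blocks of a fixed partition, each summand depends only on the branch handling that webpage, and per-block optimization composes to global optimization; the paper's assertion then goes through. If instead $F_1$ is computed from token counts pooled across webpages (as the token-level definition of $\mathsf{Recall}$ used in Eq.~\ref{eq:ub} suggests), composition genuinely fails: take a block where extractor $A$ yields one true positive and no errors (local $F_1=1$) while extractor $B$ yields ten true positives and five errors (local $F_1=0.8$); pooled with a sibling block contributing no true positives and one hundred errors, $A$ gives global $F_1\approx 0.02$ while $B$ gives $\approx 0.16$. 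Since \textsc{SynthesizeExtractors} retains only locally maximal extractors, the globally optimal program would be discarded, breaking the ``only if'' direction. Note also that your proposed repair via recall-monotonicity (Lemma~\ref{lem:ub}, Theorem~\ref{thm:pruning}) will not close this gap: monotonicity governs how a single extractor degrades under syntactic extension, not how precision/recall trade-offs interact across disjoint blocks under pooling. The honest fix is the one you name last --- make explicit the assumption (macro-averaged $F_1$, or more generally an objective that is separable and per-block monotone over the partition) under which per-block optimization provably composes; the paper implicitly relies on exactly such an assumption without stating it.
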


\begin{proof}

We first show that any program $p$ returned achieves the highest $F_1$ score on the set of examples $\examples$. We give a brief proof sketch since the proof is similar to the one that proves the soundness for the {\sc SynthesizeExtractor}. Observe that in each iteration we update $opt$ and $R$ if the synthesized program achieved higher $F_1$  than the current $opt$ value and append the program to $R$ if its $F_1$ score is $opt$, and these variables are not updated otherwise.  

Given a partition, according to  Lemma~\ref{lem:synthesizebranch}, {\sc SynthesizeBranch} returns the optimal branch programs with respect to a set of examples in the partition. Since each branch covers a disjoint set of examples, the collection of all optimal branch programs for each set of examples in a partition forms a set of optimal top-level programs with respect to the given partition. Furthermore, since each partition consists all of the provided training examples, the variable $opt$ and R keeps track of the score and set of top-level programs that achieves the highest $F_1$ score on the training examples.

We now show that the algorithm goes through all programs that achieves the optimal $F_1$ on the given training set. Since the \dsl DSL is recursive, we need to impose depth constraints in order for the algorithm to terminates. Let's suppose that the maximum depth  of a guard is $d_g$ and the maximum depth of a extractor is $d_e$ in the {\sc Synthesize} procedure. Using Lemma~\ref{lem:synthesizebranch}, we know that for each branch, the {\sc SynthesizeBranch} procedure returns all possible branch programs in the DSL that achieves highest $F_1$ score for $\examples^+$ and differentiate $\examples^+$ and $\examples^-$. Since each branch  covers a disjoint set of examples, the collection of all synthesized branch programs for each set of examples in a partition, covers all the top-level programs in the DSL that achieves optimal $F_1$ on the given partition. 

Since the algorithm will go over all optimal top-level programs for each partition and only those that achieves the highest $F_1$ score for the  given training set are added to the return variable $R$, we prove that the set of programs returned by {\sc Synthesize}($\examples, Q, \keywordss$) includes $p$ if and only if, for any other  \dsl program $p'$, the  $F_1(p)  \geq F_1(p')$.
\end{proof}

\section{Detailed Derivation of Section~\ref{sec:self-supervision}}

In this section, we describe in more detail how we derive Eq.~\ref{eqn:derivationgoal} from Eq.~\ref{eqn:selfsupervisedobj}, including the key step Eq.~\ref{eqn:derivation}.

\paragraph{Assumptions.}

We assume the standard probabilistic model from the semi-supervised learning literature~\cite{semisuperviseassump}:
\small
\begin{align*}
p(i,o,\pi)=p(o\mid\pi,i)\cdot p(\pi)\cdot p(i),
\end{align*}
\normalsize
where $i$ is an input, $o$ is an output, and $\pi$ is a program. In addition, we assume that
\small
\begin{align*}
p(o\mid\pi,i)&=\mathbbm{1}(o=\pi'(i)) \\
p(\pi)&=|\Pi|^{-1},
\end{align*}
\normalsize
where $\Pi$ is the space of all possible programs (which is finite since we consider programs of bounded depth). In other words, we assume $p(o\mid\pi,i)$ is only non-zero when $o$ is the output of $\pi$. Next, we note that $p(i)$ is the data distribution, so we do not need to model it. In addition, we also assume that two different examples $(i,o)$ and $(i',o')$ are conditionally independent given $\pi$---i.e.,
\small
\begin{align*}
p(i,o,i',o',\pi)=p(o\mid\pi,i)\cdot p(o'\mid\pi,i')\cdot p(\pi).
\end{align*}
\normalsize
Finally, we let $\Pi^*$ denote the set of programs that are correct for all examples $(i',o')\in\examples$---i.e.,
\small
\begin{align*}
\Pi^*=\{\pi\in\Pi\mid\forall(i',o')\in\examples\;.\;o'=\pi(i')\}.
\end{align*}
\normalsize
In practice $\Pi^*$ may be empty (i.e., if there are no programs that satisfy all the given examples $(i',o')\in\examples$), so we approximate it using the set of programs that achieve optimal loss (e.g., according to the $F_1$ score). This set might be very large, so we additionally approximate it using samples $\Pi_E$. This approximation is implicitly used in Section~\ref{sec:self-supervision}.

\paragraph{Theoretical analysis.}

We show the following result:
\begin{theorem}
Letting
\small
\begin{align*}
\tilde{L}(\pi;\examples,\inputexamples)
&=\sum_{\outputexamples}p(\outputexamples\mid\inputexamples,\examples)\cdot L(\pi;\inputexamples,\outputexamples),
\end{align*}
\normalsize
then
\small
\begin{align}
\label{eqn:derivationgoal2}
\tilde{L}(\pi;\examples,\inputexamples)&=\frac{1}{\mathcal{N}}\sum_{j=1}^{\mathcal{N}}L(\pi;\inputexamples,\outputexamples_j),
\end{align}
\normalsize
where $\mathcal{N}=|\Pi^*|$, and where
\small
\begin{align*}
\outputexamples_j=(\pi_j(\inputexamplee_1),...,\pi_j(\inputexamplee_K))\qquad(\forall\pi_j\in\Pi^*).
\end{align*}
\normalsize
\end{theorem}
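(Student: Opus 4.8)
The plan is to begin from the definition $\tilde{L}(\pi;\examples,\inputexamples)=\sum_{\outputexamples}p(\outputexamples\mid\inputexamples,\examples)\cdot L(\pi;\inputexamples,\outputexamples)$ and reduce the posterior $p(\outputexamples\mid\inputexamples,\examples)$ to a uniform mixture over the outputs induced by the programs in $\Pi^*$. Once this is done, the outer sum over $\outputexamples$ collapses by the indicator structure and Eq.~\ref{eqn:derivationgoal2} follows. The crux is a short Bayesian computation showing that $p(\pi\mid\examples)$ is \emph{exactly} uniform on $\Pi^*$, after which everything is bookkeeping with indicator functions over the finite (bounded-depth) program space $\Pi$.

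First I would compute $p(\pi\mid\examples)$ via Bayes' rule. Using the assumed factorization and the conditional independence of the labeled examples given $\pi$, the likelihood factorizes as $p(\examples\mid\pi)=\prod_{(\inputexamplee',\outputexamplee')\in\examples}\mathbbm{1}(\outputexamplee'=\pi(\inputexamplee'))$, which equals $1$ precisely when $\pi$ agrees with every labeled example, i.e.\ when $\pi\in\Pi^*$, and $0$ otherwise; hence $p(\examples\mid\pi)=\mathbbm{1}(\pi\in\Pi^*)$. With the uniform prior $p(\pi)=|\Pi|^{-1}$, the evidence is $p(\examples)=\sum_{\pi\in\Pi}\mathbbm{1}(\pi\in\Pi^*)|\Pi|^{-1}=|\Pi^*|/|\Pi|=\mathcal{N}/|\Pi|$. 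Dividing, the $|\Pi|^{-1}$ factors cancel and I obtain $p(\pi\mid\examples)=\mathbbm{1}(\pi\in\Pi^*)/\mathcal{N}$, i.e.\ the uniform distribution on $\Pi^*$.

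Next I would marginalize over programs as in Eq.~\ref{eqn:derivation}, invoking determinism to write $p(\outputexamples\mid\pi',\inputexamples)=\prod_{k=1}^K\mathbbm{1}(\outputexamplee_k=\pi'(\inputexamplee_k))=\mathbbm{1}(\outputexamples=\outputexamples_{\pi'})$, where $\outputexamples_{\pi'}=(\pi'(\inputexamplee_1),\dots,\pi'(\inputexamplee_K))$. Substituting the uniform posterior and discarding the terms with $\pi'\notin\Pi^*$ (which carry zero mass) yields $p(\outputexamples\mid\inputexamples,\examples)=\frac{1}{\mathcal{N}}\sum_{j=1}^{\mathcal{N}}\mathbbm{1}(\outputexamples=\outputexamples_j)$. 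I would note explicitly that this identity is insensitive to whether the $\outputexamples_j$ are distinct: coincident outputs are simply counted with the correct multiplicity, so the posterior mass on any realized output equals the fraction of $\Pi^*$ producing it.

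Finally I would substitute this posterior into $\tilde{L}$ and interchange the two finite sums, giving $\tilde{L}(\pi;\examples,\inputexamples)=\sum_{\outputexamples}\bigl(\frac{1}{\mathcal{N}}\sum_{j=1}^{\mathcal{N}}\mathbbm{1}(\outputexamples=\outputexamples_j)\bigr)L(\pi;\inputexamples,\outputexamples)=\frac{1}{\mathcal{N}}\sum_{j=1}^{\mathcal{N}}L(\pi;\inputexamples,\outputexamples_j)$, where the inner sum over $\outputexamples$ collapses because only the term $\outputexamples=\outputexamples_j$ survives the indicator. This is exactly Eq.~\ref{eqn:derivationgoal2}. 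The only genuinely delicate step is the Bayesian cancellation establishing uniformity on $\Pi^*$; the remaining manipulations are interchanges of finite sums with indicator sifting, so no convergence or measurability issues arise given that $\Pi$ is finite by the bounded-depth restriction.
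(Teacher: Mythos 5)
Your proposal is correct and follows essentially the same route as the paper's proof: reduce the posterior $p(\outputexamples\mid\inputexamples,\examples)$ to a uniform mixture over the outputs of programs in $\Pi^*$ via the deterministic-likelihood and uniform-prior assumptions, then collapse the outer sum by indicator sifting. Your direct computation of the evidence $p(\examples)=|\Pi^*|/|\Pi|$ is in fact a slightly cleaner way to reach the uniform posterior than the paper's detour through defining $\mathcal{N}$ and invoking normalization, but the argument is the same.
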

We note that Eq.~\ref{eqn:derivationgoal2} is identical to Eq.~\ref{eqn:derivationgoal}, except in Eq.~\ref{eqn:derivationgoal} we have taken $\Pi^*$ to be the set of programs with optimal $F_1$ score on $\examples$, and have furthermore approximated this set using samples $\Pi_E$ from $\Pi^*$.

\begin{proof}
First, by our conditional independence assumption, given program $\pi$, unlabeled input examples $\inputexamples$, candidate output labels $\outputexamples$, and labeled examples $\examples$, we have
\small
\begin{align*}
&p(\inputexamples, \outputexamples, \examples, \pi) \\
&= p(\outputexamples, \pi\mid\inputexamples) \cdot p(\inputexamples) \cdot p(\examples \mid \pi) \cdot p(\pi) \\
&=\left(\prod_{k=1}^K p(o_k \mid \pi, i_k)\cdot p(i_k)\right) \cdot \left(\prod_{h=1}^H p(o_h' \mid \pi, i_h') \cdot p(i_h')\right) \cdot p(\pi),
\end{align*}
\normalsize
where $\inputexamples=(i_1,...,i_K)$, $\outputexamples=(o_1,...,o_K)$, and $\examples=(\inputexamples',\outputexamples')$, and where $\inputexamples'=(i_1',...,i_H')$, and $\outputexamples'=(o_1',...,o_H')$. In other words, $\examples$ is conditionally independent of $(\inputexamples,\outputexamples)$ given $\pi$.

Now, we proceed with our proof. First, by the law of total probability, we have
\small
\begin{align}
\label{eqn:derivation1}
p(\outputexamples\mid\inputexamples,\examples)=\sum_{\pi'\in\Pi}p(\pi'\mid\inputexamples,\examples)\cdot p(\outputexamples\mid\pi',\inputexamples,\examples).
\end{align}
\normalsize
To simplify Eq.~\ref{eqn:derivation1}, we show that $p(\outputexamples\mid\inputexamples,\examples,\pi')=p(\outputexamples\mid\inputexamples,\pi')$, and that $p(\pi'\mid\inputexamples,\examples)=p(\pi'\mid\examples)$. First, to show the former, note that 
\small
\begin{align*}
p(\pi' \mid \inputexamples, \examples) &= \frac{p(\inputexamples, \examples \mid \pi') \cdot p(\pi')}{p(\inputexamples, \examples)}  \\
&= \frac{p(\inputexamples) \cdot p(\examples \mid \pi') \cdot p(\pi')}{p(\inputexamples) \cdot p(\examples)} \\
&= \frac{p(\examples \mid \pi') \cdot p(\pi')}{p(\examples)} \\
&= p(\pi' \mid \examples).
\end{align*}
\normalsize
Similarly, to show the latter, we have
\small
\begin{align*}
p(\outputexamples \mid \inputexamples, \examples, \pi') &= \frac{p(\outputexamples, \examples, \inputexamples, \pi')}{p(\inputexamples, \examples, \pi')} \\
&= \frac{p(\outputexamples \mid \inputexamples, \pi') \cdot p(\inputexamples) \cdot p(\examples, \pi')}{p(\inputexamples) \cdot p(\examples, \pi')} \\
&= p(\outputexamples \mid \inputexamples, \pi').
\end{align*}
\normalsize
Thus, plugging into Eq.~\ref{eqn:derivation1}, we have
\small
\begin{align*}
p(\outputexamples\mid\inputexamples,\examples)=\sum_{\pi'\in\Pi}p(\pi'\mid\examples)\cdot p(\outputexamples\mid\pi',\inputexamples).
\end{align*}
\normalsize
Note that this equation is identical to Eq.~\ref{eqn:derivation}. Next, by definition of $p(o\mid\pi,i)$, we have
\small
\begin{align*}
p(\outputexamples\mid\inputexamples,\pi')=\prod_{k=1}^K \mathbbm{1}(\outputexamplee_k=\pi'(\inputexamplee_k)),
\end{align*}
\normalsize
so it follows that
\small
\begin{align}
\label{eqn:derivation2}
p(\outputexamples \mid \inputexamples, \examples)=\sum_{\pi'\in\Pi}p(\pi'\mid\examples)\cdot\prod_{k=1}^K \mathbbm{1}(\outputexamplee_k=\pi'(\inputexamplee_k)).
\end{align}
\normalsize
It remains to compute $p(\pi'\mid\examples)$. To this end, we have
\small
\begin{align*}
p(\pi' \mid \examples) &= \frac{p(\inputexamples', \outputexamples', \pi')}{p(\examples)} \\
&= \frac{p(\outputexamples' \mid \pi', \inputexamples')\cdot p(\pi') \cdot p(\inputexamples)}{p(\examples)} \\
&= \frac{\left(\prod_{h=1}^H \mathbbm{1}(o_h'=\pi'(i_h'))\right)\cdot p(\inputexamples)}{|\Pi|\cdot p(\examples)}  \\
&= \frac{\mathbbm{1}(\pi'\in\Pi^*) \cdot p(\inputexamples)}{|\Pi^*|\cdot|\Pi|\cdot p(\examples)}.
\end{align*}
\normalsize
Thus, letting $\mathcal{N} = |\Pi| \cdot |\Pi^*| \cdot p(\examples) / p(\inputexamples)$, we have
\small
\begin{align}
\label{eqn:derivation3}
p(\pi' \mid \examples) = \frac{\mathbbm{1}(\pi' \in \Pi^*)}{\mathcal{N}}.
\end{align}
\normalsize
Note that since $\sum_{\pi'\in\Pi}p(\pi'\mid\examples)=1$, we must have $\mathcal{N}=|\Pi^*|$. Plugging Eq.~\ref{eqn:derivation3} into Eq.~\ref{eqn:derivation2}, we have
\small
\begin{align*}
p(\outputexamples\mid\inputexamples,\examples)=\frac{1}{\mathcal{N}}\sum_{\pi'\in\Pi^*}\prod_{k=1}^K\mathbbm{1}(o_k=\pi'(i_k)).
\end{align*}
\normalsize
The remaining steps follow Section~\ref{sec:self-supervision}. In particular, by the definition of $\outputexamples_j$, we have
\small
\begin{align*}
p(\outputexamples\mid\inputexamples,\examples)
&=\frac{1}{\mathcal{N}}\sum_{j=1}^{\mathcal{N}}\mathbbm{1}(\outputexamples=\outputexamples_j),
\end{align*}
from which it follows that
\begin{align*}
\tilde{L}(\pi;\examples,\inputexamples)
&=\sum_{\outputexamples}p(\outputexamples\mid\inputexamples,\examples)\cdot L(\pi;\inputexamples,\outputexamples) \nonumber  \\
&=\frac{1}{\mathcal{N}}\sum_{j=1}^{\mathcal{N}}L(\pi;\inputexamples,\outputexamples_j),
\end{align*}
\normalsize
as claimed.
\end{proof}

\section{Additional ablation studies}

To help readers better understand the design choices behind \toolname, we  present additional ablation studies evaluating the impact of  the different input modalities used by \toolname as well as its sensitivity to the number of labeled webpages. 

\begin{figure}[!t]
\begin{center}
        \definecolor{coolgray}{RGB}{38,70,83}
    \definecolor{powderblue}{RGB}{190,227,219}
    \definecolor{peachbeige}{RGB}{255,214,186}
    \definecolor{champagnepink}{RGB}{237,221,212}
\begin{tikzpicture}[scale=0.9]
\begin{axis}[
    ymin=0,
    ymax=0.8,
    ybar,
    enlarge x limits=0.20,
    legend style={at={(0.5,-0.15)},
      anchor=north,legend columns=-1},
    ylabel={Avg $F_1$},
    symbolic x coords={Faculty,Conference,Class, Clinic},
    xtick=data,
    legend image code/.code={
        \draw [#1] (0cm,-0.1cm) rectangle (0.2cm,0.25cm); },
    ]
\addplot[black,fill=champagnepink
] coordinates {(Faculty, 0.22) (Conference, 0.42) (Class, 0.35) (Clinic, 0.49)};
\addplot[black,fill=coolgray
] coordinates {(Faculty, 0.68) (Conference, 0.57) (Class, 0.52) (Clinic, 0.647)};
\addplot[black,fill=powderblue
] coordinates {(Faculty, 0.74) (Conference, 0.70) (Class, 0.69) (Clinic, 0.657)};
\legend{\toolname-{\sc NL},\toolname-{\sc KW},\toolname}
\end{axis}
\end{tikzpicture}
    \captionsetup{font={small}}
    \caption{Comparison between \toolname and its variants}
    \label{fig:ablation-input}
\end{center}
\end{figure}

\begin{figure}[!t]

\begin{center}
            \definecolor{warmgrey}{RGB}{121,125,98}
    \definecolor{coolgray}{RGB}{38,70,83}
    \definecolor{deeporange}{RGB}{244,162,97}
    \definecolor{cadmiumorange}{RGB}{231,111,81}
    \definecolor{greyolive}{RGB}{155,155,122}
    \definecolor{peachbeige}{RGB}{217,174,72}
    \definecolor{yellowbeige}{RGB}{241,220,167}
    \definecolor{yellowochre}{RGB}{255,203,105}
    \definecolor{nectar}{RGB}{208,140,96}
    \begin{tikzpicture}[scale=0.9]
        \begin{axis}[
            ylabel near ticks,
            ymax= 1.0,
            legend style = {nodes={scale=0.8, transform shape}, anchor=north, at={(0.5, 1.12)}, legend columns=4, draw=none},
            xlabel style={yshift=1mm},
            ylabel = $F_1$ score,
            xlabel = \# of examples,
        ]
        \legend{${\tt conf}_{t1}$, ${\tt conf}_{t2}$, ${\tt conf}_{t3}$, ${\tt conf}_{t4}$, ${\tt conf}_{t5}$, ${\tt conf}_{t6}$}
        \addplot[
            line width=0.3mm,
            mark=square*,
            color=warmgrey
        ] coordinates {
            (5, 0.659)
            (4, 0.611)
            (3, 0.562)
            (2, 0.577)
            (1, 0.095)
        };
        \addplot[
            line width=0.3mm,
            mark=*,
            color=coolgray
        ] plot coordinates {
            (5, 0.740)
            (4, 0.740)
            (3, 0.740)
            (2, 0.740)
            (1, 0.251)
        };

        \addplot[
            line width=0.3mm,
            mark=triangle*,
            color=peachbeige
        ] plot coordinates {
            (5, 0.638)
            (4, 0.527)
            (3, 0.568)
            (2, 0.568)
            (1, 0.134)
        };
        
        \addplot[
            line width=0.3mm,
            mark=diamond,
            color=cadmiumorange
        ] plot coordinates {
            (5, 0.500)
            (4, 0.314)
            (3, 0.325)
            (2, 0.240)
            (1, 0.211)
        };
        \addplot[
            line width=0.3mm,
            mark=x,
            color=yellowochre
        ] plot coordinates {
            (5, 0.913)
            (4, 0.913)
            (3, 0.851)
            (2, 0.800)
            (1, 0.800)
        };
        \addplot[
            line width=0.3mm,
            mark=o,
            color=nectar
        ] plot coordinates {
            (5, 0.727)
            (4, 0.727)
            (3, 0.713)
            (2, 0.696)
            (1, 0.489)
        };
        \end{axis}
    \end{tikzpicture}
    
        \captionsetup{font={small}}
        \caption{$F_1$ score achieved in each task of the Conference domains with respect to the number of labeled examples.}
        \label{fig:ablation-ex}
\end{center}
\end{figure}
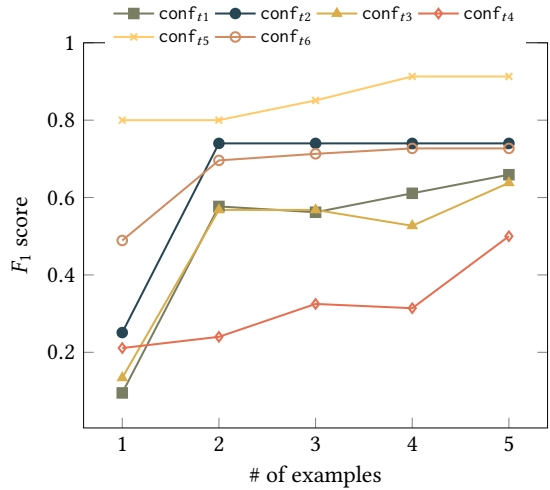

\subsection{Evaluation on the types of input}

Recall that \toolname takes two types of queries as input: a question and a set of keywords. In this section, we evaluate  the impact of these two types of inputs on the end-to-end performance of the tool. Specifically,  \figref{fig:ablation-input} shows the average $F_1$ score for each evaluation domain for the following two variants of \toolname:
\begin{itemize}
    \item \toolname-{\sc NL}: This variant only uses the question but not the keywords.
    \item \toolname-{\sc KW}: This variant only uses the keywords but not the question. 
\end{itemize}

As we can see from \figref{fig:ablation-input}, the system works the best when both modalities of inputs are utilized. We also performed 1-tailed $t$-tests to check whether the differences in performance are significant and obtained p-values less than $0.01$ in the comparison to the two variants. Thus, these results provide evidence that using a combination of questions and keywords as inputs leads to more accurate  results.

\subsection{Evaluation on the number of labeled webpages}
In this section, we evaluate \toolname's sensitivity to the number of labeled examples. For this evaluation, we focus on all $6$ tasks in the conference domain and vary the number of training examples from one to five. Specifically, we obtain these examples by  removing a subset of the labeled webpages used in our evaluation from Section~\ref{sec:eval}.

Our results are presented in \figref{fig:ablation-ex}. This graph shows the $F_1$ score  (y-axis) with respect to the number of labeled examples  (x-axis). As shown in  \figref{fig:ablation-ex},  while  performance generally gets worse as we reduce the number of examples, sensitivity to the number of examples varies from task to task. For example, for the ${\tt conf}_{t5}$ task, \toolname is able to synthesize programs that achieve high $F_1$ with only a single labeled example, whereas $F_1$ score drops significantly for  ${\tt conf}_{t4}$  if we remove even one of the examples. 


\clearpage
\section{Task-wise stats for Section 8.1 }

\begin{table*}[b]
\centering
\footnotesize
\caption{Questions and Keywords for each task.}
\begin{tabular}{|c|c|l|l|}
\hline
    Domain & Task & Question & Keywords  \\
    \hline 
    \multirow{8}{*}{Faculty} & ${\tt fac}_{t1}$ & Who are the current PhD students? & Current Students, PhD \\
    & ${\tt fac}_{t2}$ & What are the conference publications at PLDI? &  Conference Publications,   PLDI \\
    & ${\tt fac}_{t3}$ & What courses does this person teach? & Courses, Teaching \\
    & ${\tt fac}_{t4}$ & What are the the papers that received the Best Paper Award? &  Conference Publications,  Best Paper Award \\
    & ${\tt fac}_{t5}$ & What program committees or PC has this person served for? & Program Committee, PC \\
    & ${\tt fac}_{t6}$ & What conference papers have been published in 2012? &  Conference Publications,  2012 \\
    & ${\tt fac}_{t7}$ & Who are the co-authors among all papers published at PLDI? &  Conference Publications,  PLDI \\
    & ${\tt fac}_{t8}$ & Who are the alumni or formerly advised students? & Alumni, Former Students \\
    \hline
    \multirow{6}{*}{Conference} & ${\tt conf}_{t1}$ & Who are the program chairs or co-chairs? & Program Chair, Program Co-chair, PC Chair \\
    & ${\tt conf}_{t2}$ & Who are the program committee (PC) members? & Program Committee, PC \\
    & ${\tt conf}_{t3}$ & What are the topics of interest? & Topics \\
    & ${\tt conf}_{t4}$ & When is the paper submission deadline? & Paper Submission Deadline \\
    & ${\tt conf}_{t5}$ & Is this conference double-blind or single-blind? & Double-blind, Single-blind \\
    & ${\tt conf}_{t6}$ & What institutions are the program committee or PC members from? & Program Committee, PC \\
    \hline
    \multirow{6}{*}{Class} & ${\tt class}_{t1}$ & When are the lectures or sections? & Section, Lecture \\ 
    & ${\tt class}_{t2}$ & Who are the instructors? & Instructors \\ 
    & ${\tt class}_{t3}$ & Who are the teaching assistants (TAs)? & Teaching Assistants, TAs \\ 
    & ${\tt class}_{t4}$ & When are the midterms or exams? & Exam, Midterm, Test \\ 
    & ${\tt class}_{t5}$ & What are the textbooks? & Textbooks, Materials, Required Texts \\ 
    & ${\tt class}_{t6}$ & How are the grades counted in this class? & Grades, Grading, Rubric \\ 
    \hline
    \multirow{5}{*}{Clinic}  & ${\tt clinic}_{t1}$ & Who are the doctors or providers? & Doctor, Provider, Our Team \\
    & ${\tt clinic}_{t2}$ & What types of service do they provide? &  Our Services \\
    & ${\tt clinic}_{t3}$ & What types of treatments do they specialize in? & Treatments, Specialties \\
    & ${\tt clinic}_{t4}$ & What insurance plan do they accept? & Insurance, Plans Accepted \\
    & ${\tt clinic}_{t5}$ & Where are the clinics located? & Locations \\
    \hline
    
\end{tabular}
\end{table*}

\begin{table*}[b]
\footnotesize
    \centering
    \caption{Evaluation results for each baseline per task. P stands for Precision and R means Recall.}
    \begin{tabular}{|c|c|ccc|ccc|ccc|ccc|}
    \hline
    \multirow{2}{*}{Domain} & \multirow{2}{*}{Task} & \multicolumn{3}{c|}{\toolname} & \multicolumn{3}{c|}{{\sc BERTQA}} & \multicolumn{3}{c|}{{\sc HYB}} & \multicolumn{3}{c|}{{\sc EntExtract}} \\
    & & P & R & $F_1$ & P & R & $F_1$ & P & R & $F_1$ & P & R & $F_1$ \\
    \hline
    \multirow{8}{*}{Faculty} & ${\tt fac}_{t1}$  & 0.49 & 0.82 & 0.62 & 0.72 & 0.42 & 0.53 & 1.0 & 0.03 & 0.05 & 0 & 0.05 & 0 \\
    & ${\tt fac}_{t2}$ & 0.80 & 0.91 & 0.85 & 0.45 & 0.10 & 0.16 & 0 & 0 & 0 & 0.06 & 0.27 & 0.1 \\
    & ${\tt fac}_{t3}$  & 0.74 & 0.86 & 0.80 & 0.61 & 0.05 & 0.10 & 0.32 & 0.05 & 0.16 & 0.02 & 0.08 & 0.03 \\
    & ${\tt fac}_{t4}$ & 0.76 & 0.56 & 0.65 & 0.33 & 0.15 & 0.21 & 0.5 & 0.03 & 0.05 & 0.02 & 0.26 & 0.03 \\
    & ${\tt fac}_{t5}$ & 0.78 & 0.94 & 0.85 & 0.43 & 0.04 & 0.07 & 0 & 0 & 0 & 0.01 & 0.03 & 0.02 \\
    & ${\tt fac}_{t6}$ & 0.76 & 0.86 & 0.81 & 0.44 & 0.05 & 0.09 & 1.0 & 0.03 & 0.06 & 0.07 & 0.28 & 0.12 \\
    & ${\tt fac}_{t7}$ & 0.87 & 0.90 & 0.88 & 0.37 & 0.16 & 0.22 & 0 & 0 & 0.01 & 0 & 0.07 & 0.01 \\
    & ${\tt fac}_{t8}$ & 0.55 & 0.59 & 0.57 & 0.15 & 0.04 & 0.06 & 1.0 & 0.01 & 0.02 & 0 & 0.06 & 0.01 \\
    \hline
    \multirow{6}{*}{Conference} & ${\tt conf}_{t1}$ & 0.61 & 0.72 & 0.66 & 0.36 & 0.51 & 0.42 & 0 & 0 & 0 & 0 & 0.26 & 0.01 \\
    & ${\tt conf}_{t2}$ & 0.82 & 0.67 & 0.74 & 0.28 & 0.01 & 0.01 & 1.0 & 0 & 0 & 0.31 & 0.53 & 0.39 \\
    & ${\tt conf}_{t3}$ & 0.76 & 0.55 & 0.64 & 0.85 & 0.08 & 0.15 & 0.36 & 0.07 & 0.12 & 0.03 & 0.14 & 0.05 \\
    & ${\tt conf}_{t4}$ & 0.45 & 0.57 & 0.50 & 0.61 & 0.42 & 0.50 & 0.21 & 0.02 & 0.04 & 0 & 0.07 & 0 \\
    & ${\tt conf}_{t5}$ & 0.88 & 0.95 & 0.91 & 0.86 & 0.82 & 0.84 & 0 & 0 & 0 & 0 & 0 & 0 \\
    & ${\tt conf}_{t6}$ & 0.78 & 0.68 & 0.72 & 0.50 & 0.01 & 0.02 & 0 & 0 & 0 & 0.08 & 0.21 & 0.12 \\
    \hline
    \multirow{6}{*}{Class} & ${\tt class}_{t1}$ & 0.47 & 0.81 & 0.59 & 0.63 & 0.4 & 0.52 & 0.13 & 0.13 & 0.13 & 0 & 0.02 & 0 \\
    & ${\tt class}_{t2}$ & 0.74 & 0.71 & 0.72 & 0.57 & 0.61 & 0.59 & 0.15 & 0.04 & 0.06 & 0 & 0.05 & 0.01 \\
    & ${\tt class}_{t3}$ & 0.89 & 0.80 & 0.85 & 0.67 & 0.16 & 0.26 & 0 & 0 & 0 & 0.12 & 0.32 & 0.17 \\
    & ${\tt class}_{t4}$ & 0.62 & 0.74 & 0.67 & 0.28 & 0.21 & 0.24 & 0 & 0 & 0 & 0.01 & 0.10 & 0.02 \\
    & ${\tt class}_{t5}$ & 0.64 & 0.62 & 0.63 & 0.54 & 0.11 & 0.18 & 0.05 & 0.06 & 0.05 & 0.01 & 0.02 & 0.02 \\
    & ${\tt class}_{t6}$ & 0.44 & 0.97 & 0.61 & 0.61 & 0.05 & 0.1 & 0.75 & 0.01 & 0.01 & 0.09 & 0.06 & 0.07 \\
    \hline
    \multirow{5}{*}{Clinic} & ${\tt clinic}_{t1}$ & 0.80 & 0.83 & 0.82 & 0 & 0 & 0 & 0 & 0 & 0 & 0.32 & 0.55 & 0.40 \\
    & ${\tt clinic}_{t2}$ & 0.58 & 0.55 & 0.56 & 0.32 & 0.03 & 0.05 & 0.39 & 0.19 & 0.26 & 0.05 & 0.11 & 0.07 \\
    & ${\tt clinic}_{t3}$ & 0.62 & 0.47 & 0.53 & 0.19 & 0.02 & 0.04 & 0.72 & 0.06 & 0.12 & 0.02 & 0.07 & 0.03 \\
    & ${\tt clinic}_{t4}$ & 0.83 & 0.53 & 0.65 & 0.61 & 0.01 & 0.02 & 0 & 0 & 0 & 0.28 & 0.23 & 0.26 \\
    & ${\tt clinic}_{t5}$ & 0.69 & 0.77 & 0.73 & 0.44 & 0.05 & 0.08 & 1.0 & 0.03 & 0.06 & 0.01 & 0.05 & 0.02 \\
    \hline
    \end{tabular}
    \label{tab:domain-results}
\end{table*}

\end{document}